\documentclass[a4paper,UKenglish]{lipics}

\usepackage{microtype}


\bibliographystyle{plain}

\usepackage{amsmath}
\usepackage{amscd}
\usepackage{amssymb}
\usepackage{subfig}
\usepackage{verbatim}
\usepackage{graphicx}
\usepackage{url}
\usepackage{color}
\usepackage{tikz}
\usepackage{subfig}

\newtheorem{lem}{Lemma}
\newtheorem{thm}{Theorem}
\def\Real{\Bbb R}
\def\pnorm#1#2{\|\,#1 \,\|_{#2}}
\def\calP{{\mathcal{P}}}
\def\calR{{\mathcal{R}}}
\def\calF{{\mathcal{F}}}
\def\dist#1#2{\mbox{dist}(#1,#2)}
\def\ts#1{\mathfrak{S}_{#1}}
\def\sensitivity#1#2{\sigma_{#1}(#2)}
\def\abs#1{{\lvert #1 \rvert}}
\def\sensitivity#1#2{\sigma_{#1}(#2)}
\def\dim#1{{\rm dim}\left(#1\right)}
\def\proj#1#2{{\rm proj}\left(#1,#2\right)}
\def\innerproduct#1#2{\langle #1,#2 \rangle}
\def\spn#1{{\rm span}\left(#1\right)}
\def\vdist#1#2{{\rm dist}_{\vert}\left(#1,#2\right)}

\title{On the Sensitivity of Shape Fitting Problems\footnote{This material is based upon
 work supported by the National Science Foundation under Grant No. 0915543.}}
\author[1]{Kasturi Varadarajan}
\author[2]{Xin Xiao}
\affil[1]{Department of Computer Science\\
University of Iowa, Iowa City, IA 52242, USA\\
\texttt{kasturi-varadarajan@uiowa.edu}}
\affil[2]{Department of Computer Science\\
University of Iowa, Iowa City, IA 52242, USA\\
\texttt{xin-xiao@uiowa.edu}}
\authorrunning{Kasturi Varadarajan and Xin Xiao} 

\Copyright[nc-nd]
          {Kasturi Varadarajan and Xin Xiao}

\subjclass{F.2.2 Analysis of Algorithms and Problem Complexity}
\keywords{Coresets, shape fitting, k-means, subspace approximation}

\serieslogo{}
\volumeinfo
  {Billy Editor, Bill Editors}
  {2}
  {Conference title on which this volume is based on}
  {1}
  {1}
  {1}
\EventShortName{}
\DOI{10.4230/LIPIcs.xxx.yyy.p}

\begin{document}

\maketitle

\begin{abstract}
In this article, we study shape fitting problems, $\epsilon$-coresets, and total sensitivity. We focus on the $(j,k)$-projective clustering problems, including $k$-median/$k$-means, $k$-line clustering, $j$-subspace approximation, and the integer $(j,k)$-projective clustering problem. We derive upper bounds of total sensitivities for these problems, and obtain $\epsilon$-coresets using these upper bounds. Using a dimension-reduction type argument, we are able to greatly simplify earlier results on total sensitivity for the $k$-median/$k$-means clustering problems, and obtain positively-weighted $\epsilon$-coresets for several variants of the $(j,k)$-projective clustering problem. We also extend an earlier result on $\epsilon$-coresets for the integer $(j,k)$-projective clustering problem in fixed dimension to the case of high dimension.
\end{abstract}

\section{Introduction}
In this article, we study shape fitting problem, coresets, and in particular, total sensitivity. A shape fitting problem is specified by a triple $(\Real^d,\calF, {\rm dist} )$, where $\Real^d$ is the $d$-dimensional Euclidean space, $\calF$ is a family of subsets of $\Real^d$, and ${\rm dist}: \Real^d \times \Real^d \to \Real^+$ is a continuous function that we will refer to as a {\em distance} function. We also assume that (a) $\dist{p}{q} = 0$ if and only if $p = q$, and (b)
$\dist{p}{q} = \dist{q}{p}$. We refer to each $F \in {\cal F}$ as a {\em shape}, and we require each shape $F$ to be a non-empty, closed, subset of $\Real^d$. We define the {\em distance} of a point $p \in \Real^d$ to a shape $F \in {\cal F}$ to be 
$\dist{p}{F} = \min_{q \in F}\dist{p}{q}$. An instance of a shape fitting problem is specified by a finite point set $P \subset \Real^d$. We slightly abuse notation and use $\dist{P}{F}$ to denote $\sum_{p \in P}\dist{p}{F}$ when $P$ is a set of points in $\Real^d$. The goal is to find a shape which best fits $P$, that is, a shape minimizing $\sum_{p \in P} \dist{p}{F}$ over all shapes $F \in \calF$. This is referred to as the $L_1$ fitting problem, which is the main focus of this paper. In the $L_{\infty}$ fitting problem, we seek to find a shape $F \in \calF$ minimizing $\max_{p \in P}\dist{p}{F}$.

In this paper, we focus on the $(j,k)$-projective clustering problem. Given non-negative integers $j$ and $k$, the family of shapes is the set of $k$-tuples of affine $j$-subspaces (that is, $j$-flats) in $\Real^d$. More precisely, each shape is the union of some $k$ $j$-flats. The underlying distance function is usually the $z^{\rm th}$ power of the Euclidean distance, for a positive real number $z$. When $j=0$, $\calF$ is the set of all $k$-point sets of $\Real^d$, so the $(0,k)$-projective clustering problem is the $k$-median clustering problem when the distance function is the Euclidean distance, and it is the $k$-means clustering problem when the distance function is the square of the Euclidean distance; when $j=1$, the family of shapes is the set of $k$-tuples of lines in $\Real^d$; when $k=1$, $(j,1)$-projective clustering is the subspace approximation problem, where the family of shapes is the set of $j$-flats. Other than these projective clustering problems where $j$ or $k$ is set to specific values,
 another variant of the $(j,k)$-projective clustering problem is the integer $(j,k)$-projective clustering problem, where we assume that the input points have integer coordinates (but there is no restriction on $j$ and $k$), and the magnitude of these coordinates is at most $n^c$, where $n$ is the number of input points and $c>0$ is some constant. That is, the points are in a polynomially large integer grid.

An $\epsilon$-coreset for an instance $P$ of a shape fitting problem is a weighted set $S$, such that for any shape $F \in \calF$, the summation of distances from points in $P$ approximates the weighted summation of the distances from points in $S$ up to a  multiplicative factor of $(1\pm \epsilon)$. A more precise definition (Definition~\ref{defn: epsiloncoreset}) follows later. Coresets can be considered as a succinct representation of the point set; in particular, in order to obtain a $(1+\epsilon)$-approximation solution fitting $P$, it is sufficient to find a $(1+\epsilon)$-approximation solution for the coreset $S$. One usually seeks a small coreset, whose {\em size} $|S|$ is independent of the cardinality of $P$. Coresets of size $o(n)$ for the $(j,k)$-projective clustering problem for general $j$ and $k$ are not known to exist. However, the $k$-median/$k$-means clustering, $k$-line clustering, $j$-subspace approximation, and integer $(j,k)$-projective clustering problems admit small coresets.

Langberg and Schulman~\cite{DBLP:conf/soda/LangbergS10} introduced a general approach to coresets via  the notion of {\em sensitivity} of points in a point set, which provides a natural way to set up a probability distribution $\Pr{\cdot}$ on $P$. Roughly speaking, the sensitivity of a point with respect to a point set measures the importance of the point, in terms of fitting shapes in the given family of shapes $\calF$. Formally, the sensitivity of point $p$ in a point set $P$ is defined by $\sensitivity{P}{p}:=\sup_{F \in \calF} \dist{p}{F}/\dist{P}{F}$. (In the degenerate case where the denominator in the ratio is $0$, the numerator is also $0$, and we take the ratio to be $0$; the reader should feel free to ignore this technicality.) The total sensitivity of a point set $P$ is defined by $\ts{P}:=\sum_{p \in P}\sensitivity{P}{p}$. The nice property of quantifying the ``importance'' of a point in a point set is that for any $F \in \calF$, $\dist{p}{F}/\dist{P}{F} \leq \sensitivity{P}{p}$. Setting the 
probability of selecting $p$ to be $\sensitivity{P}{p}/\ts{P}$, and the weight of $p$ to be $\ts{P}/\sensitivity{P}{p}$, $\forall p \in P$, one can show that the variance of the sampling 
scheme is $O((\ts{P})^2)$. When $\ts{P}$ is $o(n)$, (for example, a constant or logarithmic in terms of $n = |P|$), one can obtain an $\epsilon$-coreset by sampling a small number of points. Langberg and Schulman~\cite{DBLP:conf/soda/LangbergS10} show that the total sensitivity of any (arbitrarily large) point set $P \subset \Real^d$ for $k$-median/$k$-means clustering problem is a constant, depending only on $k$, independent of the cardinality of $P$ and the dimension of the Euclidean space where $P$ and $\calF$ are from. Using this, they derived a coreset for these problems with size depending polynomially on $d$ and $k$ and independent of $n$. Their work can be seen as evolving from earlier work on coresets for the $k$-median/$k$-means and related problems via other low variance sampling schemes \cite{1655320, Clarkson05subgradientand, 1247072, DBLP:journals/siamcomp/DasguptaDHKM09}. 

Feldman and Langberg~\cite{DBLP:conf/stoc/FeldmanL11} relate the notion of an $\epsilon$-coreset with the well-studied notion of an $\epsilon$-approximation of range spaces. They use a ``functional representation'' of points: consider a family of functions $\calP=\{f_p(\cdot) \vert p \in P\}$, where each point $p$ is associated with a function $f_p: X \to \Real$. The target here is to pick a small subset $S \subseteq P$ of points, and assign weights appropriately, so that $\sum_{p \in S}w_p f_p(x)$ approximates $\sum_{p \in P}f_p(x)$ at every $x \in X$. When $X$ is $\calF$ and $f_p(F)=\dist{p}{F}$, this is just the original $\epsilon$-coreset for $P$. However, $f_p(\cdot)$ can be any other function defined over $\calF$, for example, $f_p(\cdot)$ can be the ``residue distance'' of $p$, {\it i.e.,} $f_p(F)=|\dist{p}{F}-\dist{p'}{F}|$, where $p'$ is the projection of $p$ on the optimum shape $F^{\ast}$ fitting $P$. The definitions of sensitivities and total sensitivity easily carry over in this setting: $\sensitivity{\calP}{f_p}=\sup_{x \in X}f_p(x)/\sum_{f_q \in \calP}f_q(x)$ (which coincides with $\sensitivity{P}{p}$ when $f_p(\cdot)$ is $\dist{p}{\cdot}$), and $\ts{\calP}=\sum_{f_p \in \calP}\sensitivity{\calP}{f_p}$ (which coincides with $\ts{P}$ similarly). One of the results in \cite{DBLP:conf/stoc/FeldmanL11} is that an approximating subset $S \subseteq P$ can be computed with the size $|S|$ upper bounded by the product of two quantities: $(\ts{\calP})^2$, and another parameter, the ``dimension'' (see Definition \ref{defn:shatteringdim}) of a certain range space induced by $\calP$, denoted $\dim{\calP}$. We remark that $\dim{\calP}$ depends on $d$, which is the dimension of Euclidean space where $P$ is from, and some other parameters related to $X$; when $X$ is the family of shapes for the $(j,k)$-projective clustering problem, $\dim{\calP}$ also depends on $j$ and $k$. This connection allows them to use many results from the well-studied area of $\epsilon$-approximation of range spaces (such as 
deterministic construction of small $\epsilon$-approximation of range spaces), thus constructing smaller coreset deterministically, and removes some routine analysis in the traditional way of obtaining coresets via random sampling.\\

\subsection{Our Results}
In this article, we prove upper bounds of total sensitivities for the $(j,k)$-projective clustering problems. In particular, we show a careful analysis of computing total sensitivities for shape fitting problems in high dimension. Total sensitivity $\ts{P}$ for a point set $P \subset \Real^d$ may depend on $d$: consider the shape fitting problem where the family of shapes is the set of hyperplanes, and $P$ is a point set of size $d$ in general position. Then clearly $\sensitivity{P}{p}=1$ (since there always exists a hyperplane containing all $d-1$ points other than $p$), so $\ts{P}=d$. \\

One question that arises naturally is that whether the dependence of the total sensitivity on the dimension $d$ is essential. To answer this question, we show that if the distance function is Euclidean distance, or the $z^{\rm th}$ power of Euclidean distance for $z \in [1,\infty)$, then the total sensitivity function of a shape fitting problem $(\Real^d, \calF, {\rm dist})$ in the high dimensional space $\Real^d$ is roughly the same as that of the low-dimensional variant $(\Real^{d'},\calF',{\rm dist})$, where $d'$ is the ``intrinsic'' dimension of the shapes in $\calF$, and $\calF'$ consists of shapes contained in the low dimensional space $\Real^{d'}$. A reification of this statement is that the total sensitivity function of the $(j,k)$-projective clustering is independent of $d$. For the $(j,k)$-projective clustering problems, the shapes are intrinsically low dimensional: each $k$-tuple of $j$-flats is contained in a subspace of dimension at most $k(j+1)$. As we will see, the total sensitivity function 
for $(\Real^d,\calF, {\rm dist})$, where $\calF$ is the family of $k$-tuples of $j$-flats in $\Real^d$, is of the same magnitude as the total sensitivity function of $(\Real^{f(j,k)},\calF', {\rm dist})$, where $f(j,k)$ is a function of $j$ and $k$ (which is independent of $d$), and $\calF'$ is the family of $k$-tuples of $j$-flats in $\Real^{f(j,k)}$. \\

We sketch our approach to upper bound the total sensitivity of the $(j,k)$-projective clustering. We first make the observation (Theorem~\ref{thm:tsprojectedpoints} below) 
that the total sensitivity of a point set $P$ is upper bounded by a constant multiple of the total sensitivity of $P'=\proj{P}{F^{\ast}}$, which is the projection of $P$ on the optimum shape $F^{\ast}$ fitting $P$ in $\calF$. The computation of total sensitivity of $P'$ is very simple in certain cases; for example, for $k$-median clustering, $P'$ is a multi-set which contains $k$ distinct points, whose total sensitivity can be directly bounded by $k$. Therefore, we are able to greatly simplify the proofs in \cite{DBLP:conf/soda/LangbergS10}. Another more important use of this observation is that it allows us to get a dimension-reduction type result for the $(j,k)$-projective clustering problems: note that although the point set and the shapes might be in a high dimension space $\Real^d$, the projected point set $P'$ lies in a subspace of dimension $(j+1)k$ (since each $k$-tuple of $j$-flats is contained in a subspace of dimension at most $(j+1)k$), which is small under the assumption that both $j$ and $k$ 
are constant. Therefore, $\ts{P}$, which usually depends on $d$ if one directly computes it in a high dimensional space, depends only on $j$ and $k$, since $\ts{P}$ is $O(\ts{P'})$. 

Our method for bounding the total sensitivity directly translates into a template for computing $\epsilon$-coresets: 

\begin{enumerate}
\item Compute $F^{\ast}$, the optimal shape fitting $P$. (It suffices to use
an approximately optimal shape.) Compute $P'$, the projection of $P$ onto $F^{\ast}$.
\item Compute a bound on the sensitivity of each point in $P'$ with respect to $P'$. Since the ambient dimension is $O(jk)$, we may use a method that yields
bounds on $\ts{P'}$ with dependence on the ambient dimension. Use Theorem~\ref{thm:tsprojectedpoints} to translate this into a bound for
$\sensitivity{P}{p}$ for each $p \in P$.
\item Sample points from $P$  with probabilities proportional to $\sensitivity{P}{p}$ to obtain a coreset, as described in \cite{DBLP:conf/soda/LangbergS10,DBLP:conf/stoc/FeldmanL11}.
\end{enumerate}
 
We now point out the difference between our usage of total sensitivity in the construction of coresets and the method in \cite{DBLP:conf/stoc/FeldmanL11}. The construction of coresets in \cite{DBLP:conf/stoc/FeldmanL11} may also be considered as based on total sensitivity, however in a very different way: 

\begin{enumerate}
\item First obtain a small weighted point set $S \subseteq P$, such that $\dist{P}{F}-\dist{P'}{F}$ is approximately the same as $\dist{S}{F}-\dist{S'}{F}$ ($S'$ is $\proj{S}{F^{\ast}}$) for every $F \in \calF$.
\item Then compute an $\epsilon$-coreset $Q' \subseteq P'$ for the projected point set $P'$, that is, $\dist{Q'}{F}$ approximates $\dist{P'}{F}$ for every $F \in \calF$. (Since $P'$ is from a low-dimensional subspace, the ambient dimension is small, and the computation can exploit this.)
\end{enumerate} 

Therefore, for each $F \in {\cal F}$, $\dist{P}{F}  =  (\dist{P}{F}-\dist{P'}{F})+\dist{P'}{F} \approx  (\dist{S}{F}-\dist{S'}{F})+\dist{Q'}{F}$.

Thus the weighted set $Q'\cup S\cup S'$ is a coreset for $P$, but notice that the points in $S'$ have negative weights. In contrast, the weights of points in the coreset in our construction are positive. The advantage of getting coresets with positive weights is that in order to get an approximate solution to the shape fitting problem, we may run algorithms or heuristics developed for the shape fitting problem on the coreset, such as \cite{Agarwal:2004:KMP:1055558.1055581}. When points have negative weights, on the other hand, some of these heuristics do not work or need to be modified appropriately.

Another useful feature of the coresets obtained via our results is that the coreset is a subset of the original point set. When each point stands for a data item, the coreset inherits a natural interpretation. See \cite{mahoney2009matrix} for a discussion of this issue in a broader context. 
  
The sizes of the coresets in this paper are somewhat larger than the size of coresets in \cite{DBLP:conf/stoc/FeldmanL11}. Roughly speaking, the size of the coreset in \cite{DBLP:conf/stoc/FeldmanL11} is $f_1(d)+f_2(j,k)$, where $f_1(d)$ (respectively $f_2(j,k)$) is a function depending only on $d$ (respectively $j$ and $k$) for the $(j,k)$-projective clustering problem, while the coreset size in our paper is $f_1(d)\cdot f_2(j,k)$. 

{\bf \noindent Organization of this paper:} In this article, we focus on the construction that establishes small total sensitivity for various shape fitting problems, and the size of the resulting coreset. For clarity, we omit the description of algorithms for computing such bounds on sensitivity. Efficient algorithms result from the construction using a methodology that is now well-understood. Also because the weights for points in the coreset are nonnegative, the coreset lend itself to streaming settings, where points arrive one by one as $p_1,p_2,\cdots$~\cite{Har-Peled:2004:CKK:1007352.1007400}\cite{DBLP:conf/stoc/FeldmanL11}. In Section 2, we present necessary definitions used through this article, and summarize related results from \cite{DBLP:conf/stoc/FeldmanL11} and \cite{DBLP:conf/soda/VaradarajanX12}. In Section 3, we prove the upper bound of total sensitivity of an instance of a shape fitting problem in high dimension by its low dimensional projection. In Sections 4, 5, 6, and 7, we apply the 
upper bound from Section 3 to $k$-median/$k$-means, clustering, $k$-line clustering, $j$-subspace approximation, and the integer $(j,k)$-projective clustering problem, respectively, to obtain upper bounds for their total sensitivities, and the 
size of the resulting $\epsilon$-coresets.

\section{Preliminaries}
\label{sec:prelims}
In this section, we formally define some of the concepts studied in this article, and state crucial results from previous work. We begin by defining an $\epsilon$-coreset. 


\begin{definition}[$\epsilon$-coreset of a shape fitting problem]
\label{defn: epsiloncoreset}
Given an instance $P \subset \Real^d$ of a shape fitting problem $(\Real^d, \calF, {\rm dist})$, and $\epsilon \in [0,1]$, an $\epsilon$-coreset of $P$ is a (weighted) set $S \subseteq P$, together with a weight function $w: S \to \Real^+$, such that for any shape $F$ in $\calF$, it holds that $\abs{\dist{P}{F}-\dist{S}{F}} \leq \epsilon \cdot \dist{P}{F}$,
where by definition, $\dist{P}{F}  = \sum_{p \in P}\dist{p}{F}, \mbox{ and } \dist{S}{F}  = \sum_{p \in S} w(p)\dist{p}{F}$.
The size of the weighted coreset $S$ is defined to be $|S|$.
\end{definition}

We note that in the literature, the requirement that the weights be non-negative, as well as the requirement that the coreset $S$ be a subset of the original instance $P$, are sometimes relaxed. We include these requirements in the definition to emphasize that the coresets constructed here do satisfy them. We now define the sensitivities of points in a shape fitting instance, and the total sensitivity of the instance.  

\begin{definition}[Sensitivity of a shape fitting instance~\cite{DBLP:conf/soda/LangbergS10}]
\label{defn:sensitivity}
Given an instance $P \subset \Real^d$ of a shape fitting problem $(\Real^d, \calF, {\rm dist})$, the sensitivity of a point $p$ in $P$ is
$\sensitivity{P}{p}:=\inf\{\beta \geq 0 \vert \dist{p}{F}\leq \beta \dist{P}{F},\forall F \in \calF\}.$

Note that an equivalent definition is to let $\sensitivity{P}{p} =
\sup_{F \in \calF} \dist{p}{F}/\dist{P}{F}$, with the understanding that when the denominator in the ratio is $0$, the ratio itself is $0$.

The total sensitivity of the instance $P$, is defined by $\ts{P}:=\sum_{p \in P}\sensitivity{P}{p}$.
The total sensitivity function of the shape fitting problem is $\ts{n}:=\sup_{\abs{P}=n} \ts{P}$.
\end{definition}

We now need a somewhat technical definition in order to be able to state an important earlier result from \cite{DBLP:conf/stoc/FeldmanL11}. On a first reading, the reader is welcome to skip the detailed definition.

\begin{definition}[The dimension of a shape fitting instance~\cite{DBLP:conf/stoc/FeldmanL11}]
\label{defn:shatteringdim}
Let $P \subset \Real^d$ be an instance of a shape fitting problem $(\Real^d, \calF, {\rm dist})$. For a weight function $w: P \to \Real^+$, consider the set system $(P, \calR)$, where $\calR$ is a family of subsets of $P$ defined as follows: each element in $\calR$ is a set of the form $R_{F,r}$ for some $F \in \calF$ and $r \geq 0$, and $R_{F,r}=\{p \in P \ \vert \ w_p \cdot \dist{p}{F} \leq r\}$. That is, $R_{F,r}$ is the set of those points in $P$ whose weighted distance to the shape $F$ is at most $r$. 
The dimension of the instance $P$ of the shape fitting problem, denoted by $\dim{P}$, is the smallest integer $m$, such that for any weight function $w$ and $A \subseteq P$ of size $\abs{A}=a \geq 2$, we have: $\abs{\{A \cap R_{F,r} \vert F \in \calF, r \geq 0\}} \leq a^m$.
\end{definition}

For instance, in the $(j,k)$-projective clustering problem with the underlying distance function ${\rm dist}$ being the $z^{\rm th}$ power of the Euclidean distance, the dimension $\dim{P}$ of any instance $P$ is $O(jdk)$, independent of $|P|$ \cite{DBLP:conf/stoc/FeldmanL11}. This is shown by methods similar to the ones
 used to bound the VC-dimension of geometric set systems. In fact, this bound is the only fact that we will need about the dimension of a shape fitting instance. 

The following theorem recalls the connection established in \cite{DBLP:conf/stoc/FeldmanL11} between coresets and sensitivity via the above notion of dimension.

\begin{theorem}[Connection between total sensitivity and $\epsilon$-coreset~\cite{DBLP:conf/stoc/FeldmanL11}]
\label{ts:connectionbetweentotalsensitivityandcoreset}
Given any $n$-point instance $P \subset \Real^d$ of a shape fitting problem 
$(\Real^d, \calF, {\rm dist})$, and any $\epsilon \in (0, 1]$, there exists an $\epsilon$-coreset for $P$ of size $O\left(\left(\frac{\ts{n}}{\epsilon}\right)^2\dim{P}\right)$.
 \end{theorem}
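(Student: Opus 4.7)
The plan is to reduce the problem of computing an $\epsilon$-coreset for $P$ to the well-studied problem of computing an $\epsilon'$-approximation of a range space, where $\epsilon' = \Theta(\epsilon / \mathfrak{S}(n))$. The bridge between the two notions is the definition of sensitivity: if a point $p$ has small sensitivity then its weighted distance to any shape is ``controlled'' by the global sum $\mathrm{dist}(P,F)$, which lets us integrate over level sets and pay $\epsilon'$ only against a max that is small relative to $\mathrm{dist}(P,F)$.

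First I would build a weighted multiset $\tilde{P}$ from $P$ as follows. Let $s_p = \sigma_P(p)$, let $t = \mathfrak{S}(P) \leq \mathfrak{S}(n)$, and replace each $p \in P$ with $m_p = \lceil n \cdot s_p / t\rceil$ copies, each carrying weight $w_p = 1/m_p$. Then $|\tilde P| = N$ with $N = \Theta(n)$, and $\sum_{\tilde p \in \tilde P} w_{\tilde p}\,\mathrm{dist}(\tilde p, F) = \mathrm{dist}(P,F)$ for every $F \in \mathcal{F}$. The key property of this duplication-and-weighting is the uniform level bound
\[
\max_{\tilde p \in \tilde P} w_{\tilde p}\cdot \mathrm{dist}(\tilde p, F) \;\leq\; \frac{t}{n}\cdot \mathrm{dist}(P,F) \;\leq\; \frac{\mathfrak{S}(n)}{n}\cdot \mathrm{dist}(P,F),
\]
since $w_{\tilde p}\cdot \mathrm{dist}(\tilde p, F) \leq \tfrac{t}{n s_p}\cdot s_p\,\mathrm{dist}(P,F)$ by the definition of sensitivity. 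This is the step that converts a global quantity (total sensitivity) into a per-point regularity used by the range-space machinery.

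Next, I would consider the range space $(\tilde P,\tilde{\mathcal{R}})$ with $\tilde{\mathcal{R}} = \{\tilde P \cap R_{F,r} : F\in\mathcal{F},\, r\geq 0\}$ exactly as in Definition~\ref{defn:shatteringdim}, so that its shattering dimension is at most $\dim(P)$ (the definition quantifies over all weight functions on $P$, so duplications do not hurt). Take a uniform $\epsilon'$-approximation $S$ of this range space with
\[
\epsilon' \;=\; \frac{\epsilon}{2\,\mathfrak{S}(n)},
\]
which by standard $\epsilon$-approximation constructions has size $O(\epsilon'^{-2}\dim(P)) = O((\mathfrak{S}(n)/\epsilon)^2\dim(P))$. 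Because multiple copies of the same original point $p$ can be consolidated with summed weights, $S$ can be realized as a weighted subset of $P$, satisfying Definition~\ref{defn: epsiloncoreset}.

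The heart of the argument is then the level-set integration: for any shape $F$,
\[
\mathrm{dist}(P,F) \;=\; \int_0^\infty |\{\tilde p \in \tilde P : w_{\tilde p}\mathrm{dist}(\tilde p,F) > r\}|\,dr,
\]
and an analogous formula holds for $S$. Applying the $\epsilon'$-approximation bound inside the integral range-by-range, truncated at $r_{\max} := \max_{\tilde p} w_{\tilde p}\mathrm{dist}(\tilde p,F)$, gives
\[
\bigl|\mathrm{dist}(P,F) - \mathrm{dist}(S,F)\bigr| \;\leq\; \epsilon'\cdot N \cdot r_{\max} \;\leq\; \epsilon'\cdot N \cdot \tfrac{\mathfrak{S}(n)}{n}\mathrm{dist}(P,F) \;\leq\; \epsilon\,\mathrm{dist}(P,F),
\]
after rescaling weights by $N/|S|$. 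This is exactly the $\epsilon$-coreset condition. The main obstacle I expect is making the level-set integration precise when the weighted distances span many scales and ensuring that the resulting $S$ can be expressed as a non-negative weighted subset of the original $P$ (rather than of the duplicated set $\tilde P$); this is handled by collapsing duplicate copies and absorbing the resulting factor into $|S|$, but it must be done carefully so that the approximation guarantee is preserved.
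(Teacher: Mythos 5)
Your proposal is correct and reproduces the standard Feldman--Langberg argument that the paper cites for this theorem without reproving it: replicate each point proportionally to its sensitivity so that the per-point weighted distance is uniformly bounded by $(\ts{n}/n)\,\dist{P}{F}$, take an $\epsilon' = \Theta(\epsilon/\ts{n})$-approximation of the induced range space of size $O((\ts{n}/\epsilon)^2\dim{P})$, and recover the coreset guarantee by integrating over level sets up to $r_{\max}$. The two subtleties you flag --- that duplication does not inflate the shattering dimension because all copies of a point are classified identically by every range, and that a multiset $\epsilon'$-approximation collapses to a nonnegatively weighted subset of the original $P$ after the $N/|S|$ rescaling --- are exactly the ones that need a line of justification, and your reasons for both are sound.
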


Finally, we will need known bounds on the total sensitivity of $(j,k)$-projective clustering problem. These earlier bounds involve the dimension $d$ corresponding to shape fitting problem $(\Real^d,\calF, {\rm dist})$.
 
\begin{theorem}[Total sensitivity of $(j,k)$-projective clustering problem in fixed dimension~\cite{DBLP:conf/soda/VaradarajanX12}
]
\label{thm:tsklinecenterintegerjkprojectiveclustering}
We have the following upper bounds of total sensitivities for the $(j,k)$-projective clustering problem $(\Real^d, \calF, {\rm dist})$, where ${\rm dist}$ is the $z$-th power of the Eucldiean distance for $z \in (0, \infty)$.
\begin{itemize}
 \item $j=1$ ($k$-line center): $\ts{n}$ is $O(k^{f(k,d)} \log n)$, where $f(d,k)$ is a function depending only on $d$ and $k$.
 \item integer $(j,k)$-projective clustering problem: For any $n$-point instance $P$, with each coordinate being an integer of magnitude at most $n^{c}$ for
any constant $c > 0$, $\ts{P}$ is $O((\log n)^{f(d,j,k)})$, where $f(d,j,k)$ is a function depending only on $d$, $j$, and $k$.
\end{itemize}
\end{theorem}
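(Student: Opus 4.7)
The plan is to derive both bounds from a single reduction: if every $m$-point instance $P' \subseteq \Real^d$ of the shape fitting problem $(\Real^d,\calF,\mathrm{dist})$ admits an $L_\infty$ coreset of size at most $g(m,d)$---that is, a subset $Q \subseteq P'$ of that size with $\max_{q \in Q}\dist{q}{F} = \max_{p \in P'}\dist{p}{F}$ for every $F \in \calF$---then $\ts{n} = O(g(n,d)\log n)$. Granted this reduction, the two cases of the theorem reduce to exhibiting small $L_\infty$ coresets: size $k^{f(d,k)}$ for $k$-line clustering in fixed dimension, and size $(\log n)^{f(d,j,k)}$ for the integer $(j,k)$-projective clustering problem.

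To prove the reduction I would peel $L_\infty$ coresets iteratively. Let $Q_0$ be an $L_\infty$ coreset of $P_0 := P$, and inductively let $Q_i$ be an $L_\infty$ coreset of $P_i := P \setminus (Q_0 \cup \cdots \cup Q_{i-1})$; the union $Q_{<r} := \bigcup_{i<r} Q_i$ has size at most $r\,g(n,d)$. A short induction on $r$ shows that $Q_{<r}$ contains the top-$r$ points of $P$ by distance to $F$, for every $F \in \calF$: by the $L_\infty$ property, $Q_i$ captures a maximum-distance point of $P_i$, which is the $(i{+}1)$-st largest in $P$. Consequently, for any $p \notin Q_{<r}$ and any $F$, at least $r$ points of $P$ have distance to $F$ at least $\dist{p}{F}$, so $\dist{P}{F} \geq r \cdot \dist{p}{F}$, yielding $\sensitivity{P}{p} \leq 1/r$. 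Hence $|\{p : \sensitivity{P}{p} > 2^{-i}\}| \leq 2^i g(n,d)$ for every $i \geq 0$, and bucketing by $\sensitivity{P}{p} \in (2^{-i-1},2^{-i}]$ gives $\ts{P} \leq \sum_{i=0}^{\lceil \log n \rceil} 2^i g(n,d) \cdot 2^{-i} + O(1) = O(g(n,d)\log n)$, where the $O(1)$ tail absorbs the contribution of points with sensitivity below $1/n$. A $(1-\epsilon)$-approximate $L_\infty$ coreset suffices, with an extra constant factor, by replacing the exact induction with a near-exact version.

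For the $k$-line center case ($j=1$), I would invoke known $L_\infty$ coreset constructions for $k$-line clustering in fixed-dimensional Euclidean space, which yield coresets of size $k^{f(d,k)}$; substituting into the reduction gives $\ts{n} = O(k^{f(d,k)}\log n)$. The main obstacle is the integer $(j,k)$-projective clustering case, where the principal work is building $L_\infty$ coresets of polylogarithmic size. Here I would exploit that the points lie in a polynomial-size integer grid: the set of ``critical'' distances between integer points and rational $j$-flats takes only polynomially many distinct magnitudes, and a layered bucketing argument---classifying points by magnitude of distance to a candidate shape and using the bounded VC dimension of the $(j,k)$-projective range space---produces $L_\infty$ coresets of size $(\log n)^{f(d,j,k)}$. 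Plugging this into the reduction yields $\ts{P} = O((\log n)^{f(d,j,k)+1})$, which can be absorbed by incrementing the exponent $f(d,j,k)$. This integer-grid $L_\infty$ coreset construction is the delicate geometric step; the peeling-and-bucketing reduction itself is clean and problem-agnostic.
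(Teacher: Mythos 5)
This theorem is cited from Varadarajan and Xiao (SODA~2012) and is not re-proved in the present paper, so there is no in-paper proof to compare against; your peeling reduction---iteratively extracting $L_\infty$ coresets $Q_0, Q_1, \ldots$, arguing that any $p \notin Q_{<r}$ has $\sensitivity{P}{p} \leq 1/r$, and then bucketing sensitivities dyadically to get $\ts{P} = O(g(n,d)\log n)$---is exactly the argument that paper uses to connect $L_\infty$ coreset size to total sensitivity, and you invoke the right known $L_\infty$ coreset constructions for the $k$-line and integer $(j,k)$ cases. The reduction is carried out correctly, so the proposal is correct and takes essentially the same route as the cited source.
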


\section{Bounding the Total Sensitivity via Dimension Reduction}
In this section, we show that the total sensitivity of a point set $P$ is of the same order as that of $\proj{P}{F^{\ast}}$, which is the projection of $P$ onto an optimum shape $F^{\ast}$ from $\calF$ fitting $P$. This result captures the fact that total sensitivity of a shape fitting problem quantifies the complexity of shapes, in the sense that total sensitivity depends on the dimension of smallest subspace containing each shape, regardless of the dimension of the ambient space where $P$ is from. 
\begin{definition}[projection of points on a shape]
For a shape fitting problem $(\Real^d, \calF, {\rm dist})$,
define ${\rm proj}: \Real^d \times \calF \to \Real^d$, where $\proj{p}{F}$ is the projection of $p$ on a shape $F$, that is, $\proj{p}{F}$ is a point in $F$ which is nearest to $p$, with ties broken arbitrarily. That is, $\dist{p}{\proj{p}{F}}=\min_{q \in F}\dist{p}{q}$. We abuse the notation to denote the multi-set $\{\proj{p}{F} \vert p \in P\}$ by $\proj{P}{F}$ for $P \subset \Real^d$.
\end{definition}
We first show that $\ts{P}$ is $O(\ts{\proj{P}{F^{\ast}}})$, where $F^{\ast}$ is an optimum shape fitting $P$ from $\calF$. In particular, this implies that when $F^{\ast}$ is a low-dimensional object, the total sensitivity of $P \subset \Real^d$ can be upper bounded by the total sensitivity of a point set contained in a low dimension subspace.
\begin{theorem}[Dimension reduction, computing the total sensitivity of a point set in high dimensional space with the projected lower dimensional point set]
\label{thm:tsprojectedpoints}
Given an instance $P$ of a shape fitting problem $(\Real^d, \calF, {\rm dist})$, let $F^{\ast}$ denote a shape that minimizes $\dist{P}{F}$ over all $F \in \calF$. Let $p'$ denote $\proj{p}{F^{\ast}}$ and let $P'$ denote $\proj{P}{F^{\ast}}$. Assume that the distance function satisfies the relaxed triangle inequality: $\dist{p}{q} \leq \alpha (\dist{p}{r}+\dist{r}{q})$ for any $p,q,r \in \Real^d$ for some constant $\alpha \geq 1$. Then
\begin{enumerate}
\item the following inequality holds: $\ts{P} \leq 2\alpha^2\ts{P'}+\alpha$.
\item if $\dist{P}{F^{\ast}} = 0$, then $\sensitivity{P}{p} = \sensitivity{P'}{p'}$ for each $p \in P$. If $\dist{P}{F^{\ast}} > 0$, then $\sensitivity{P}{p} \leq \left( \alpha \frac{ \dist{p}{p'} }{ \dist{P}{F^{\ast}}} + 2 \alpha^2 \sensitivity{P'}{p'} \right).$
\end{enumerate}  
\end{theorem}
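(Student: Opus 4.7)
The plan is to control each sensitivity $\sensitivity{P}{p}$ pointwise in terms of $\sensitivity{P'}{p'}$ and the local displacement $\dist{p}{p'}$, and then sum. I would first dispose of the degenerate case $\dist{P}{F^{\ast}} = 0$: since $\dist{P}{F^{\ast}} = \sum_{p \in P} \dist{p}{p'}$ is a sum of nonnegative terms, it forces $\dist{p}{p'} = 0$, and hence $p = p'$, for every $p \in P$ (by the defining axiom that $\dist{x}{y} = 0$ implies $x = y$). Then $P = P'$ as multi-sets, so $\sensitivity{P}{p} = \sensitivity{P'}{p'}$, and the first claim follows as well since $\alpha \geq 1$.

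From here on assume $\dist{P}{F^{\ast}} > 0$, and fix an arbitrary $F \in \calF$. Two applications of the relaxed triangle inequality are the workhorse: choosing $r \in F$ that realizes $\dist{p'}{F}$ gives $\dist{p}{F} \leq \dist{p}{r} \leq \alpha(\dist{p}{p'} + \dist{p'}{F})$, and the symmetric choice with an $r \in F$ realizing $\dist{p}{F}$ gives $\dist{p'}{F} \leq \alpha(\dist{p}{p'} + \dist{p}{F})$. Summing the second inequality over $p$ and using $\sum_{p \in P} \dist{p}{p'} = \dist{P}{F^{\ast}}$ yields $\dist{P'}{F} \leq \alpha(\dist{P}{F^{\ast}} + \dist{P}{F})$; then invoking the optimality of $F^{\ast}$, namely $\dist{P}{F^{\ast}} \leq \dist{P}{F}$, produces the key lower bound $\dist{P}{F} \geq \dist{P'}{F}/(2\alpha)$.

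The rest is bookkeeping. Using the first inequality in the numerator, splitting, and then bounding the two fractions separately (the first by $\dist{P}{F} \geq \dist{P}{F^{\ast}}$ and the second by the lower bound just derived) gives $\dist{p}{F}/\dist{P}{F} \leq \alpha\,\dist{p}{p'}/\dist{P}{F^{\ast}} + 2\alpha^{2}\,\dist{p'}{F}/\dist{P'}{F}$. Taking the supremum over $F \in \calF$ gives part 2 of the theorem; shapes with $\dist{P'}{F} = 0$ require a brief aside, but they force $\dist{p'}{F} = 0$, so the offending term vanishes harmlessly. Summing part 2 over $p \in P$ and again using $\sum_{p \in P} \dist{p}{p'} = \dist{P}{F^{\ast}}$ collapses the first sum to exactly $\alpha$ and yields part 1.

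The only genuine obstacle is noticing that two ingredients must be combined to get the lower bound on $\dist{P}{F}$ in terms of $\dist{P'}{F}$: the relaxed triangle inequality (to pass between $p$-distances and $p'$-distances) and the optimality of $F^{\ast}$ (to convert the spurious $\dist{P}{F^{\ast}}$ term into a quantity already controlled by $\dist{P}{F}$). Everything downstream is routine manipulation of inequalities, and the factor $2\alpha^{2}$ in the statement is precisely the product of the $\alpha$ from the triangle inequality applied to the numerator and the $2\alpha$ lost in the lower bound on $\dist{P}{F}$.
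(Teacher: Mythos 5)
Your proof is correct and follows essentially the same route as the paper's: handle the degenerate case $\dist{P}{F^{\ast}}=0$ by noting $P=P'$, then use the relaxed triangle inequality to bound $\dist{p}{F}$ in terms of $\dist{p}{p'}$ and $\dist{p'}{F}$, use $\dist{P'}{F}\le 2\alpha\,\dist{P}{F}$ (via optimality of $F^{\ast}$) and $\dist{P}{F}\ge\dist{P}{F^{\ast}}$ to pass to the pointwise sensitivity bound, and sum using $\sum_{p\in P}\dist{p}{p'}=\dist{P}{F^{\ast}}$. The only cosmetic difference is that you work with the supremum form of sensitivity and split the ratio at the end, whereas the paper works with the infimum form and shows $\dist{p}{F}\le(\cdot)\dist{P}{F}$ directly, which also sidesteps the $\dist{P'}{F}=0$ aside automatically; the decomposition, constants, and key estimates are identical.
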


\begin{proof}
If $\dist{P}{F^{\ast}} = 0$, then $P = P'$, and clearly both parts of the theorem hold.

Let us consider the case where $\dist{P}{F^{\ast}} > 0$. By definition, 
\begin{align*}
 \sensitivity{P}{p}& =\inf\{\beta \geq 0 \ \vert \ \dist{p}{F} \leq \beta \dist{P}{F}, \forall F \in \calF\}, \\
 \sensitivity{P'}{p'}&=\inf\{\beta' \geq 0 \ \vert \ \dist{p'}{F}\leq \beta' \dist{P'}{F},\forall F \in \calF\}.
\end{align*}
Let $F$ be an arbitrary shape in $\calF$. Then we have
\begin{equation*}
 \begin{split}
 \dist{p}{F} & \leq \alpha \dist{p}{p'}+ \alpha \dist{p'}{F}  \\
             & \leq \alpha \dist{p}{p'}+ \alpha \sensitivity{P'}{p'}\dist{P'}{F} \\
             & \leq \alpha \dist{p}{p'}+ 2\alpha^2 \sensitivity{P'}{p'}\dist{P}{F}\\ 
             & = \alpha \frac{\dist{p}{p'}}{\dist{P}{F}} \cdot \dist{P}{F} + 2 \alpha^2 \sensitivity{P'}{p'}\dist{P}{F} \\
             & \leq \alpha \frac{\dist{p}{p'}}{\dist{P}{F^{\ast}}} \cdot \dist{P}{F} + 2 \alpha^2 \sensitivity{P'}{p'}\dist{P}{F}\\
             & = \left( \alpha \frac{\dist{p}{p'}}{\dist{P}{F^{\ast}}} +  2 \alpha^2 \sensitivity{P'}{p'} \right) \dist{P}{F}.
 \end{split}
\end{equation*}
The first inequality follows from the relaxed triangle inequality, the second inequality follows from the definition of sensitivity of $p'$ in $P'$, and third inequality follows from the fact that $\dist{P'}{F} = \sum_{p' \in P'} \dist{p'}{F} \leq \sum_{p \in P}\alpha\left(\dist{p}{F}+\dist{p}{p'}\right) = \alpha(\dist{P}{F}+\dist{P}{F^{\ast}}) \leq 2\alpha\dist{P}{F},$
since $\dist{P}{F^{\ast}} \leq \dist{P}{F}$. 

Thus the second part of the theorem holds. Now,
\begin{eqnarray*}
\ts{P} & = & \sum_{p \in P} \sensitivity{P}{p} \\
       & \leq & \sum_{p \in P} \left( \alpha \frac{\dist{p}{p'}}{\dist{P}{F^{\ast}}} +  2 \alpha^2 \sensitivity{P'}{p'} \right) \\
       & = & \alpha + 2 \alpha^2 \ts{P'}.
\end{eqnarray*}
\end{proof}

We make a remark regarding the value of $\alpha$ in Theorem~\ref{thm:tsprojectedpoints} when the distance function is $z^{\rm th}$ power of Euclidean distance. It is used in Sections 4, 5, 6, and 7 when we derive upper bounds of total sensitivities for various shape fitting problems.
\begin{remark}[Value of $\alpha$ when $\dist{\cdot}{\cdot}=(\pnorm{\cdot}{2})^z$]
Let $z \in (0,\infty)$. Suppose $\dist{p}{q}=(\pnorm{p-q}{2})^z$. When $z\in (0,1)$, the weak triangle inequality holds with $\alpha=1$; when $z \geq 1$, the weak triangle inequality holds with $\alpha=2^{z-1}$. For a proof, see, for example, \cite{DBLP:conf/soda/FeldmanS12}.  
\end{remark}
Theorem~\ref{thm:tsprojectedpoints} bounds the total sensitivity of an instance
$P$ of a shape fitting problem $(\Real^d, \calF, {\rm dist})$ in terms of the total sensitivity of $P'$. Suppose that there is an $m_2 \ll d$ so that each shape $F \in \calF$ is in some subspace of dimension $m_2$. In the $(j,k)$-projective clustering problem, for example, $m_2 = k(j+1)$. Then note that $P'$ is contained in a subspace of dimension $m_2$. Furthermore, when ${\rm dist}$ is the $z^{\rm th}$ power of the Euclidean distance, it turns out that for many shape fitting problems the sensitivity of $P'$ can be bounded as if the shape fitting problem was housed in $\Real^{2m_2}$ instead of $\Real^d$. To see why this is the case for the $(j, k)$-projective clustering problem, fix an arbitrary subspace $G$ of 
dimension $\min \{d, 2m_2 \}$ that contains $P'$. Then for for any $F \in \calF$, there is an $F' \in \calF$ such that (a) $F'$ is contained in $G$, and (b)
$\dist{p'}{F'} = \dist{p'}{F}$ for all $p' \in P'$. 

The following theorem summarizes this phenomenon. For simplicity, it is stated  for the $(j,k)$-projective clustering problem, even though the phenomenon itself is somewhat more general. 

\begin{theorem}[Sensitivity of a lower dimensional point set in a high dimensional space]
\label{thm:dimensionreduction}
Let $P'$ be an $n$-point instance of the $(j,k)$-projective clustering problem 
$(\Real^d, \calF, {\rm dist})$, where ${\rm dist}$ is the $z^{\rm th}$ power of the Euclidean distance, for some $z \in (0,\infty)$. Assume that $P'$ is contained in a subspace of dimension $m_1$. (Note that for each shape $F \in \calF$,  there is a subspace of dimension $m_2 = k(j+1)$ containing it.) Let $G$ be  any subspace of dimension $m = \min \{m_1 + m_2, d\} $ containing $P'$; fix an orthonormal basis for $G$, and for each $p' \in P'$, let $p'' \in \Real^m$ be the coordinates of $p'$ in terms of this basis. Let $P'' = \{p'' \ \mid \ p' \in P'\}$, and view $P''$ as an instance of the $(j,k)$-projective clustering problem 
$(\Real^m, \calF', {\rm dist})$, where $\calF'$ is the set of all $k$-tuples of
$j$-subspaces in $\Real^m$, and ${\rm dist}$ is the $z^{\rm th}$ power of the Eucldiean distance. Then, $\sensitivity{P'}{p'} = \sensitivity{P''}{p''}$ for each
$p' \in P'$, and $\ts{P'} = \ts{P''}$.  
\end{theorem}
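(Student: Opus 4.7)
The plan is to exhibit, for each $p' \in P'$, an equality between the suprema defining $\sensitivity{P'}{p'}$ and $\sensitivity{P''}{p''}$. I will do this in two moves: first, show that the supremum defining $\sensitivity{P'}{p'}$ is unchanged if the supremum is restricted to shapes $F \in \calF$ that lie entirely in the fixed subspace $G$; second, identify those restricted shapes bijectively, and in a distance-preserving manner, with the shapes of $\calF'$ via the chosen orthonormal basis of $G$. Summing the resulting pointwise equality over $p' \in P'$ gives the claim about total sensitivity.

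For the first move, let $V$ be the linear span of $P'$; then $\dim{V} \leq m_1$, and since $G$ is a linear subspace containing $P'$ we have $V \subseteq G$. Given an arbitrary $F \in \calF$, the union of its $k$ $j$-flats is contained in some linear subspace $H_F \subseteq \Real^d$ of dimension at most $m_2 = k(j+1)$ (each $j$-flat has linear span of dimension at most $j+1$). Then $V + H_F$ has dimension at most $m_1 + m_2$, hence at most $m = \dim{G}$. I now produce an orthogonal transformation $T$ of $\Real^d$ that fixes $V$ pointwise and carries $V + H_F$ into $G$: extend an orthonormal basis of $V$ to orthonormal bases of $V + H_F$ and of $G$, and map the first extended basis to the second on $V + H_F$, then extend arbitrarily to an orthogonal map of $\Real^d$. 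Since $T$ is a Euclidean isometry, it preserves the $z^{\rm th}$ power of the Euclidean distance, and it fixes every point of $P' \subseteq V$; writing $F' := T(F) \in \calF$, this gives $\dist{p'}{F'} = \dist{p'}{F}$ for every $p' \in P'$ and hence $\dist{P'}{F'} = \dist{P'}{F}$, while $F' \subseteq G$. Consequently, replacing $F$ by $F'$ in the ratio $\dist{p'}{F}/\dist{P'}{F}$ does not change its value, so the supremum over $F \in \calF$ equals the supremum over those $F \in \calF$ with $F \subseteq G$.

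For the second move, the chosen orthonormal basis of $G$ defines a linear isometry $\phi: \Real^m \to G \subseteq \Real^d$ satisfying $\phi(p'') = p'$ for every $p' \in P'$. Being a bijective isometry, $\phi$ maps $\calF'$ bijectively onto the set of $k$-tuples of $j$-flats in $\Real^d$ that lie in $G$, and $\dist{p''}{F''} = \dist{\phi(p'')}{\phi(F'')} = \dist{p'}{\phi(F'')}$ for every $F'' \in \calF'$. Combining with the first move,
\[
\sensitivity{P''}{p''} = \sup_{F'' \in \calF'} \frac{\dist{p''}{F''}}{\dist{P''}{F''}} = \sup_{\substack{F \in \calF \\ F \subseteq G}} \frac{\dist{p'}{F}}{\dist{P'}{F}} = \sensitivity{P'}{p'},
\]
and summing over $p' \in P'$ yields $\ts{P'} = \ts{P''}$. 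The only mildly technical step is the construction of the orthogonal transformation $T$ in the first move; this is a standard fact (any two equidimensional subspaces of $\Real^d$ sharing a common subspace are related by an orthogonal map fixing that common subspace), and everything else is a direct unpacking of the definitions.
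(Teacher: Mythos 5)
Your proof is correct and fills out precisely the one-sentence rationale the paper gives for this theorem (that for any $F \in \calF$ there is an $F' \in \calF$ contained in $G$ with $\dist{p'}{F'} = \dist{p'}{F}$ for all $p' \in P'$), by explicitly constructing the orthogonal map $T$ that fixes the span of $P'$ and carries $\spn{P'} + H_F$ into $G$. This is the same approach as the paper, with the implicit rotation argument made rigorous.
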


\section{$k$-median/$k$-means Clustering Problem}
In this section, we derive upper bounds for the total sensitivity function for the $k$-median/$k$-means problems, and its generalizations, where the distance function is $z^{\rm th}$ power of Euclidean distance, using the approach in Section 4. These bounds are similar to the ones derived by Langberg and Schulman~\cite{DBLP:conf/soda/LangbergS10}, but the proof is much simplified. For the rest of the article, ${\rm dist}$ is assumed to be the $z^{\rm th}$ power of the Euclidean distance.
\begin{theorem}[Total sensitivity of $(0,k)$-projective clustering]
\label{thm:kmediants}
Consider the shape fitting problem $(\Real^d, \calF, {\rm dist})$, where $\calF$ is the set of all $k$-point subsets of $\Real^d$. We have the following upper bound on the total sensitivity: 
\begin{align*}
 \ts{n} & \leq 2^{2z-1} k + 2^{z-1}, & z \geq 1,\\
 \ts{n} & \leq 2k+1, & z \in (0,1).
\end{align*}
In particular, the total sensitivity of the $k$-median problem (which corresponds to the case when $z=1$) is at most $2k+1$, and the total sensitivity of the $k$-means problem (which corresponds to the case when $z=2$) is $8k+2$.
\end{theorem}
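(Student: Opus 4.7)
The plan is to invoke the dimension-reduction bound from Theorem~\ref{thm:tsprojectedpoints} and then observe that the projected point set $P'$ is supported on only $k$ distinct locations, so its total sensitivity is bounded by a constant independent of $n$.

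More concretely, let $F^{\ast} = \{c_1,\ldots,c_k\}$ be an optimal $k$-point shape for the instance $P$, and let $P' = \proj{P}{F^{\ast}}$. Since each point of $P'$ coincides with some $c_i$, we may partition $P'$ into groups $P'_1,\ldots,P'_k$ where $P'_i$ consists of the copies of $c_i$; write $n_i = |P'_i|$. I plan to prove that $\ts{P'} \le k$ as follows. Fix a group $P'_i$ with $n_i \ge 1$ and any shape $F \in \calF$. Since $P'$ contains $n_i$ copies of $c_i$, we have $\dist{P'}{F} \ge n_i \cdot \dist{c_i}{F}$, so every copy of $c_i$ in $P'$ has sensitivity at most $1/n_i$ (interpreted as $0$ when $\dist{c_i}{F} = 0$ for all $F$, which cannot happen unless $n_i = 0$). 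Summing over the $n_i$ copies gives a contribution of at most $1$ per group, and summing over the (at most) $k$ non-empty groups gives $\ts{P'} \le k$.

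It then remains to plug this into Theorem~\ref{thm:tsprojectedpoints}, which asserts $\ts{P} \le 2\alpha^2 \ts{P'} + \alpha$ whenever ${\rm dist}$ satisfies the relaxed triangle inequality with constant $\alpha$. Using the Remark following Theorem~\ref{thm:tsprojectedpoints}, we take $\alpha = 2^{z-1}$ when $z \ge 1$ and $\alpha = 1$ when $z \in (0,1)$. This yields
\begin{align*}
\ts{n} &\le 2 \cdot 2^{2(z-1)} \cdot k + 2^{z-1} = 2^{2z-1} k + 2^{z-1}, & z \ge 1,\\
\ts{n} &\le 2k + 1, & z \in (0,1),
\end{align*}
which matches the claim (and in particular gives $2k+1$ for $k$-median and $8k+2$ for $k$-means).

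There is no serious obstacle here: all of the work has been front-loaded into Theorem~\ref{thm:tsprojectedpoints}, and the only step specific to $(0,k)$-projective clustering is the trivial observation that a projection onto a $k$-point shape produces a multi-set supported on $k$ locations. The mildest care needed is to handle the degenerate cases where $\dist{P}{F^{\ast}} = 0$ or some $n_i = 0$, but both are immediate from the conventions in Definition~\ref{defn:sensitivity} and the first part of Theorem~\ref{thm:tsprojectedpoints}.
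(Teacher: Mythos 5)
Your proposal is correct and matches the paper's own proof essentially line for line: both apply Theorem~\ref{thm:tsprojectedpoints}, observe that $P' = \proj{P}{F^{\ast}}$ is supported on at most $k$ distinct points, bound the total sensitivity contribution of each group of coincident points by $1$ (hence $\ts{P'} \le k$), and then substitute $\alpha = 2^{z-1}$ for $z \ge 1$ and $\alpha = 1$ for $z \in (0,1)$.
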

\begin{proof}
Let $P$ be an arbitrary $n$-point set. Apply Theorem~\ref{thm:tsprojectedpoints}, and note that $\proj{P}{C^{\ast}}$, where $C^{\ast}$ is an optimum set of $k$ centers, contains at most $k$ distinct points. Assume that $C^{\ast}=\{c_1^{\ast}, c_2^{\ast},\cdots, c_k^{\ast}\}$. Let $P_i$ be the set of points in $P$ whose projection is $c_i^{\ast}$, that is, $P_i = \{p \in P \vert \proj{p}{C^{\ast}}=c_i^{\ast}\}$. It is easy to see that the summation of sensitivities of the $\abs{P_i}$ copies of $c_i^{\ast}$ is at most 1: for any $k$-point set $C$ in $\Real^d$, 
$\abs{P_i}\cdot \frac{\dist{c_i^{\ast}}{C}}{\dist{C^{\ast}}{C}}=\frac{\abs{P_i}\dist{c_i^{\ast}}{C}}{\sum_{j=1}^k \abs{P_j}\dist{c_j^{\ast}}{C}} \leq 1$.

Therefore, the total sensitivity of $\proj{P}{C^{\ast}}$ is at most $k$. Substituting $\alpha$ from the remark after Theorem~\ref{thm:tsprojectedpoints}, we get the above result.
\end{proof}
\begin{theorem}[$\epsilon$-coreset for $(0,k)$-projective clustering]
Consider the shape fitting problem $(\Real^d, \calF, {\rm dist})$, where $\calF$ is the set of all $k$-point subsets of $\Real^d$. For any $n$-point instance $P$, there is an $\epsilon$-coreset of size $O(k^3 d \epsilon^{-2})$.
\end{theorem}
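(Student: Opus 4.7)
The plan is to derive the coreset size simply by combining the total sensitivity bound from Theorem~\ref{thm:kmediants} with the generic construction of Feldman and Langberg recalled in Theorem~\ref{ts:connectionbetweentotalsensitivityandcoreset}. First, I would invoke Theorem~\ref{thm:kmediants} to conclude that for the shape fitting problem $(\Real^d,\calF,\mathrm{dist})$, where $\calF$ is the family of all $k$-point subsets of $\Real^d$, the total sensitivity function satisfies $\ts{n}=O(k)$ (with the explicit constants $2k+1$ in the $z\in(0,1)$ case and $2^{2z-1}k+2^{z-1}$ in the $z\ge 1$ case). Next, I would bound the dimension of the instance: by the remark following Definition~\ref{defn:shatteringdim}, the dimension $\dim{P}$ of any $(j,k)$-projective clustering instance in $\Real^d$ with $\mathrm{dist}$ equal to a $z$-th power Euclidean distance is at most $O((j+1)dk)$, which for $j=0$ specializes to $\dim{P}=O(dk)$. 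This bound is taken as a black-box fact from~\cite{DBLP:conf/stoc/FeldmanL11}; it is proved by a standard VC-type argument on the arrangement of sublevel sets of the $k$ distance-to-center functions $\|p-c_i\|^z$, and no separate derivation is needed here.

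With these two ingredients in hand, Theorem~\ref{ts:connectionbetweentotalsensitivityandcoreset} guarantees an $\epsilon$-coreset of size
\[
O\!\left(\left(\frac{\ts{n}}{\epsilon}\right)^{2}\dim{P}\right)
= O\!\left(\frac{k^{2}}{\epsilon^{2}}\cdot kd\right)
= O\!\left(k^{3}d\epsilon^{-2}\right),
\]
which is exactly the claimed bound. There is no real obstacle, since the work has been done in the previous section and in the cited results; the one point worth highlighting is that the construction from~\cite{DBLP:conf/stoc/FeldmanL11} underlying Theorem~\ref{ts:connectionbetweentotalsensitivityandcoreset} samples points of $P$ with probabilities proportional to $\sensitivity{P}{p}/\ts{P}$ and assigns each sampled point a positive weight $\ts{P}/(|S|\sensitivity{P}{p})$, so the resulting coreset is a nonnegatively weighted subset of $P$, matching the stronger notion in Definition~\ref{defn: epsiloncoreset} that this paper adopts.
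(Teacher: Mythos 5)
Your proposal is correct and follows the paper's proof exactly: it combines the $O(k)$ total sensitivity bound from Theorem~\ref{thm:kmediants} with the $O(kd)$ dimension bound and plugs both into Theorem~\ref{ts:connectionbetweentotalsensitivityandcoreset}. The additional remarks about why the resulting coreset is a nonnegatively weighted subset of $P$ are a welcome clarification but not a deviation.
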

\begin{proof}
Observe that the $\dim{P}$ is $O(kd)$. Using Theorem~\ref{ts:connectionbetweentotalsensitivityandcoreset}, and Theorem~\ref{thm:kmediants}, we obtain the above result. 
\end{proof}

\section{$k$-line Clustering Problem}
In this section, we derive upper bounds on the total sensitivity function for the $k$-line clustering problem, that is, the $(1,k)$-projective clustering problem.  
\begin{theorem}[Total sensitivity for $k$-line clustering problem]
\label{thm:tsklinecenter}
Consider the shape fitting problem $(\Real^d, \calF, {\rm dist})$, where
$\calF$ is the set of $k$-tuple of lines. The total sensitivity function, $\ts{n}$, is $O(k^{f(k)} \log n)$, where $f(k)$ is a function the depends only on $k$.
\end{theorem}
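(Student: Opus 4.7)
The plan is to combine the dimension-reduction machinery of Theorem~\ref{thm:tsprojectedpoints} and Theorem~\ref{thm:dimensionreduction} with the fixed-dimension bound already established in the first bullet of Theorem~\ref{thm:tsklinecenterintegerjkprojectiveclustering}. The point is that although $P$ sits in an arbitrarily high-dimensional space $\Real^d$, the intrinsic dimensionality of the shape-fitting problem, as perceived by $P'=\proj{P}{F^{\ast}}$, depends only on $k$, so the fixed-dimension bound applies after a change of coordinates.

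Let $P \subset \Real^d$ be an arbitrary $n$-point instance and let $F^{\ast}$ be an optimum $k$-tuple of lines fitting $P$. First, I would observe that since each line (a $1$-flat) is contained in a linear subspace of dimension at most $2$, the union $F^{\ast}$ of $k$ lines lies in a linear subspace of dimension at most $m_2 = 2k$. Consequently $P' = \proj{P}{F^{\ast}} \subseteq F^{\ast}$ is contained in a subspace of dimension $m_1 \leq 2k$. I would then invoke Theorem~\ref{thm:tsprojectedpoints} (with $\alpha = \max\{1,2^{z-1}\}$, a constant depending only on $z$) to obtain
\[
\ts{P} \;\leq\; 2\alpha^2 \ts{P'} + \alpha.
\]
Thus it suffices to upper bound $\ts{P'}$ by $O(k^{f(k)} \log n)$.

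Next, I would apply Theorem~\ref{thm:dimensionreduction} with $j = 1$, $m_1 \leq 2k$, and $m_2 = 2k$. Fix a subspace $G$ of dimension $m \leq m_1 + m_2 \leq 4k$ containing $P'$, choose an orthonormal basis for $G$, and let $P'' \subset \Real^m$ be the coordinate representation of $P'$. The theorem gives $\ts{P'} = \ts{P''}$, where $P''$ is now viewed as an instance of the $(1,k)$-projective clustering problem in ambient dimension $m \leq 4k$. Applying the fixed-dimension bound from Theorem~\ref{thm:tsklinecenterintegerjkprojectiveclustering} to $P''$ in $\Real^{m}$ yields $\ts{P''} = O(k^{f(m,k)} \log n)$, and since $m \leq 4k$, the exponent $f(m,k)$ is a function of $k$ alone, say $g(k)$. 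Combining the two reductions gives $\ts{P} = O(k^{g(k)} \log n)$, and since $P$ was arbitrary the same bound holds for $\ts{n}$.

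There is no genuine obstacle here: the proof is essentially bookkeeping, composing the two dimension-reduction statements with the already-known fixed-dimension result. The only point that requires a small amount of care is verifying that the subspace $G$ used in Theorem~\ref{thm:dimensionreduction} has dimension bounded by a function of $k$ alone (rather than $d$), which follows from the observation that both $m_1$ and $m_2$ are $O(k)$. Once that is checked, the exponent in the resulting bound inherits its dependence on $k$ from the fixed-dimension theorem applied at dimension $\min\{4k,d\}$.
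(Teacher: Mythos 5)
Your proof is correct and follows essentially the same route as the paper: project onto the optimal $k$-tuple of lines $F^{\ast}$ via Theorem~\ref{thm:tsprojectedpoints}, use Theorem~\ref{thm:dimensionreduction} to reinterpret $P'$ as living in $\Real^{4k}$, and then invoke the fixed-dimension bound from Theorem~\ref{thm:tsklinecenterintegerjkprojectiveclustering}. The only addition you make is spelling out the bookkeeping on $m_1$, $m_2$, and $\alpha$, which the paper leaves implicit.
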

\begin{proof}
Let $P$ be an arbitrary $n$-point set. Let $K^{\ast}$ denote an optimum set of $k$ lines fitting $P$. Using Theorems~\ref{thm:tsprojectedpoints} and \ref{thm:dimensionreduction}, it suffices to bound the sensitivity of an $n$-point instance
of a $k$-line clustering problem housed in $\Real^{4k}$. By Theorem~\ref{thm:tsklinecenterintegerjkprojectiveclustering}, the total sensitivity of this latter shape fitting problem is $O(k^{f(k)} \log n)$, where $f(k)$ is a function depending only on $k$. Therefore, $\ts{n}$ is $O(k^{f(k)} \log n)$.

(Alternatively, one could use a recent result in \cite{DBLP:conf/soda/FeldmanS12}. Let $P'$ denote the projection of $P$ into $K^{\ast}$. Since $K^{\ast}$ is a union of $k$ lines, we can upper bound the sensitivity of $P'$ by $k$ times the 
sensitivity of an $n$-point set that lies on a {\em single line}. The sensitivity of an $n$-point set that lies on a single line can be upper bounded by the
sensitivity of an $n$-point set for the {\em weighted} $(0,k)$-projective clustering problem, for which the sensitivity bound is $O(k^{f(k)}\log n)$ as shown in \cite{DBLP:conf/soda/FeldmanS12}.)
\end{proof}
Notice that for $k$-line clustering problem, the bound on the total sensitivity depends logarithmically on $n$. We give below a construction of a point set that shows that this is necessary, even for $d=2$. 
\begin{theorem}[The upper bound of total sensitivity for $k$-line clustering problem is tight]
For every $n \geq 2$, there exists an $n$-point instance of the $k$-line clustering problem $(\Real^2, \calF, {\rm dist})$, where ${\rm dist}$ is the Euclidean distance, such that the total sensitivity of $P$ is $\Omega(\log n)$.
\end{theorem}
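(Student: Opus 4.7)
It suffices to handle the $k=1$ case: for general $k\ge 2$, take $k$ well-separated translated copies of the $k=1$ construction, and the total sensitivity of the union is at least that of one copy (since any $k$-line family can restrict to a single line in one copy while ignoring the others at negligible cost). So the goal is to exhibit an $n$-point $P\subset\Real^2$ whose single-line total sensitivity is $\Omega(\log n)$.

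Set $m=\lfloor\log_2 n\rfloor$ and assemble $P$ from two ingredients: an ``anchor'' cluster of $n-m$ copies of the origin, and $m$ ``witness'' points at unit height above the $x$-axis at exponentially separated horizontal scales, say $p_j=(2^j,1)$ for $j=1,\ldots,m$ (a cleaner variant places the witnesses along a logarithmic spiral about the origin). The anchor cluster pins the $L_1$-optimal line $\ell^{\ast}$ to pass essentially through the origin and lie close to horizontal, so that $\dist{P}{\ell^{\ast}} = \Theta(m)$ --- one unit of distance per witness, with zero contribution from the anchors.

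The heart of the argument is to show $\sensitivity{P}{p_j} = \Omega(1)$ for each individual witness $p_j$. For each $j$, take $\ell_j$ to be a rotation of $\ell^{\ast}$ about the origin whose slope is matched to the scale of $p_j$. By the geometric separation of the witnesses, the distance from every other witness $p_i$ to $\ell_j$ decays geometrically in $|i-j|$, so its total contribution splits into two telescoping series (one for $i<j$, one for $i>j$) summing to $O(1)$; the anchor contribution stays $0$ since $\ell_j$ passes through the origin; yet $p_j$ itself can be arranged to sit at distance $\Omega(1)$ from $\ell_j$ by a slight perturbation of the slope. Hence $\sensitivity{P}{p_j} \ge \dist{p_j}{\ell_j}/\dist{P}{\ell_j} = \Omega(1)$, and summing over the $m$ witnesses yields $\ts{P} = \Omega(m) = \Omega(\log n)$.

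The main obstacle is to calibrate the witnesses' positions and the slopes of the $\ell_j$'s so that a single rotation simultaneously achieves a small denominator $\dist{P}{\ell_j}=O(1)$ and a large numerator $\dist{p_j}{\ell_j}=\Omega(1)$, neither geometric tail dominating the other. A logarithmic-spiral layout makes this balance automatic, since each tangent direction naturally fits a constant number of nearest witnesses at every scale; as a fallback, one may appeal to the classical $\Omega(\log n)$ lower bound on $\epsilon$-approximations of the halfplane range space in $\Real^2$, transferred via the coreset/range-space equivalence of~\cite{DBLP:conf/stoc/FeldmanL11} to deduce the desired sensitivity lower bound directly.
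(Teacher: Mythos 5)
Your reduction to $k=1$ is fatally flawed, and this sinks the whole proposal. For $k=1$, the $(1,1)$-projective clustering problem is single-line fitting, which is a special case of the $j$-subspace problem with $j=1$. Theorem~\ref{thm:jsubspacetotalsensitivity} of this very paper shows that the total sensitivity for that problem is $O(j^{1+z/2})=O(1)$ when $j=1$ and $z=1$, \emph{independent of $n$ and $d$}. Hence no $n$-point set in $\Real^2$ can have single-line total sensitivity $\Omega(\log n)$; the construction you are trying to build cannot exist. You can see the obstruction concretely in your own sketch: placing $\ell_j$ through the origin kills the anchors and keeps nearby witnesses cheap, but the witnesses at scales $2^i$ with $i\gg j$ cost $\approx 2^{i-j}$ to $\ell_j$, so the denominator $\dist{P}{\ell_j}$ blows up exponentially rather than staying $O(1)$. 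There is no single-line slope that simultaneously tames both tails; that balancing act is exactly what a \emph{second} line provides.

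Indeed, the paper's construction is crucially a $k\ge 2$ construction: the points are $p_i=(1/2^{i-1},0)$, and the shape $F_i$ is the union of the $y$-axis (a fixed vertical line) and a horizontal line at height $1/2^i$. The vertical line absorbs all points at smaller scale than $2^{-i}$ (contributing a geometric tail of total $O(2^{-i})$), while the horizontal line caps the cost of the $i$ larger-scale points at $1/2^i$ each, so $\dist{P}{F_i}=O(i/2^i)$ and $\dist{p_i}{F_i}=1/2^i$, giving a sensitivity contribution $\Omega(1/i)$. The two lines play structurally different roles, and your attempted reduction (``any $k$-line family can restrict to a single line in one copy'') throws away exactly the mechanism that makes the example work. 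Your fallback is also unsound: Theorem~\ref{ts:connectionbetweentotalsensitivityandcoreset} gives only an \emph{upper} bound on coreset size in terms of total sensitivity, so a lower bound on $\epsilon$-approximation or coreset size does not transfer to a lower bound on $\ts{P}$; moreover, $\epsilon$-approximations of halfplane range spaces in the plane have size independent of $n$, so there is no $\Omega(\log n)$ lower bound to import in the first place.
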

\begin{proof}
 We construct a point set $P$ of size $n$, together with $n$ shapes $F_i \in \calF$, $i=1,\cdots,n$, such that $\sum_{i=1}^{n}\dist{p_i}{F_i}/\dist{P}{F_i}$ is $\Omega(\log n)$. Note that this implies that $\ts{P}$ is at least $\Omega(\log n)$. Let $P$ be the following point set in $\Real^2$: $p_i=(1/2^{i-1},0)$, for $i=1,\cdots, n$. Let $F_i$ be a pair of lines: one vertical line and one horizontal line, where the vertical line is the $y$-axis, and the horizontal line is 
$\{(x,1/2^{i}) \vert x \in \Real\}$. 

 Consider the point $p_i$, where $i=1,\cdots, n$. We show that $\dist{p_i}{F_i}/\dist{P}{F_i}$ is at least $1/(2+i)$, for $i=1,\cdots, n$. For $j\leq i$, note that $\dist{p_j}{F_i}=1/2^i$: since the distance from $p_j$ to the horizontal line in $F_i$ is $1/2^i$ and the distance to the vertical line is $1/2^{j-1}$, $\dist{p_j}{F_i} = \min\{1/2^{j-1}, 1/2^i\}=1/2^i$. For $i+1\leq j\leq n$, on the other hand, $\dist{p_j}{F_i}=1/2^{j-1}$. Therefore, $\sum_{j=i+1}^n \dist{p_j}{F_i}=\sum_{j=i+1}^n 1/2^{j-1} =(1/2^{i-1})\cdot (1-(1/2)^{n-i})$. Thus, we have
 \[
  \sensitivity{P}{p_i}=\sup_{F \in \calF}\frac{\dist{p_i}{F}}{\dist{P}{F}} \geq \frac{\dist{p_i}{F_i}}{\dist{P}{F_i}} = \frac{1/2^i}{(1/2^{i-1}-1/2^{n-1}) + i \cdot (1/2^i)} > \frac{1}{2+i}
 \]
Therefore, $\ts{P} \geq \sum_{i=1}^{n}\sensitivity{P}{p_i} >\sum_{i=1}^{n}\frac{1}{2+i}$, which is $\Omega(\log n)$.  
\end{proof}
\begin{theorem}[$\epsilon$-coreset for $k$-line clustering problem]
Consider the shape fitting problem $(\Real^d, \calF, {\rm dist})$, where $\calF$ is the set of all $k$-tuples of lines in $\Real^d$. For any $n$-point instance $P$, there is an $\epsilon$-coreset with size $O(k^{f(k)}d(\log n)^2/\epsilon^2)$.
\end{theorem}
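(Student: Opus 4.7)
The plan is to mirror the coreset proof already used for the $(0,k)$-projective clustering case (Theorem on $\epsilon$-coreset after Theorem~\ref{thm:kmediants}), now plugging in the total-sensitivity bound just established for the $k$-line clustering problem. Concretely, I would combine three ingredients: (i) the bound $\ts{n} = O(k^{f(k)} \log n)$ from Theorem~\ref{thm:tsklinecenter}; (ii) the bound on $\dim{P}$ for the $(j,k)$-projective clustering problem recalled in the paragraph after Definition~\ref{defn:shatteringdim}, which for $j = 1$ gives $\dim{P} = O(dk)$; and (iii) the black-box conversion from total sensitivity to coreset size given by Theorem~\ref{ts:connectionbetweentotalsensitivityandcoreset}, which produces an $\epsilon$-coreset of size $O((\ts{n}/\epsilon)^2 \cdot \dim{P})$.

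First I would let $P$ be an arbitrary $n$-point instance of the $k$-line clustering problem in $\Real^d$. From Theorem~\ref{thm:tsklinecenter}, $\ts{n} \leq c_1 \cdot k^{f(k)} \log n$ for some absolute constant $c_1$ and some function $f(k)$ depending only on $k$. Next, I would invoke the dimension bound: since each shape in $\calF$ is a union of $k$ lines (so $j=1$), the range space associated to $P$ (with any weight function) has dimension $O(jdk) = O(dk)$, as noted right after Definition~\ref{defn:shatteringdim}. Finally, feeding both bounds into Theorem~\ref{ts:connectionbetweentotalsensitivityandcoreset} gives a coreset of size
\[
O\!\left( \frac{(k^{f(k)} \log n)^2}{\epsilon^2} \cdot dk \right) = O\!\left( k^{f'(k)} \, d \, (\log n)^2 / \epsilon^2 \right),
\]
where $f'(k) = 2 f(k) + 1$ is still a function depending only on $k$. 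Re-absorbing $f'$ back into the symbol $f$ yields the claimed bound.

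I do not anticipate any significant obstacle: each of the three ingredients is already in place in the paper, and the argument is a direct composition, parallel to the proof given for the $k$-median/$k$-means coreset after Theorem~\ref{thm:kmediants}. The only mild subtlety is bookkeeping of the $k$-dependent factors, namely absorbing the extra $k$ coming from $\dim{P}$ into the existing function $f(k)$; this is harmless because the statement of the theorem only asserts a bound of the form $k^{f(k)}$ for some such function.
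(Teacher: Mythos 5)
Your proposal is correct and matches the paper's proof exactly: both combine the total-sensitivity bound of Theorem~\ref{thm:tsklinecenter}, the dimension bound $\dim{P} = O(kd)$, and the conversion in Theorem~\ref{ts:connectionbetweentotalsensitivityandcoreset}. The bookkeeping of absorbing the extra factor of $k$ into $k^{f(k)}$ is also exactly what the paper implicitly does.
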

\begin{proof}
 This result follows from Theorem~\ref{thm:tsklinecenter}, Theorem~\ref{ts:connectionbetweentotalsensitivityandcoreset}, and the fact that $\dim{P}$ in this case is $O(kd)$. 
\end{proof}

\section{Subspace approximation}
In this section, we derive upper bounds on the sensitivity of the subspace approximation problem, that is, the $(j,1)$-projective clustering problem. For the applications of Theorems~\ref{thm:tsprojectedpoints} and \ref{thm:dimensionreduction} in the other sections, we use existing bounds on the sensitivity that have a dependence on the dimension $d$. For the subspace approximation problem, however, we derive here the dimension-dependent bounds on sensitivity by generalizing an argument from \cite{DBLP:conf/soda/LangbergS10} for the case $j = d-1$ and
$z = 2$. This derivation is somewhat technical. With these bounds in hand, the derivation of the dimension-independent bounds is readily accomplished in a manner similar to the other sections.

distance. Although the size of the $\epsilon$-coreset obtained in this way is exponential in $j$, which is larger than the size of the coreset in \cite{DBLP:conf/stoc/FeldmanL11}\cite{feldmanarvix} and Theorem~\ref{} in this section, it is still a constant (as $j$ is considered as a constant) and in particular, independent of the cardinality of the input point set. It can be considered as an simple and straight-forward way to see why small $\epsilon$-coresets exist for $j$-subspace approximation problems.\\

\subsection{Dimension-dependent bounds on Sensitivity}
We first recall the notion of an \emph{$(\alpha,\beta,z)$-conditioned basis} from \cite{DBLP:journals/siamcomp/DasguptaDHKM09}, and state one of its properties (Lemma~\ref{lem:linearalgebraresult}). We will use standard matrix terminilogy: $m_{ij}$ denotes the entry in the $i$-th row and $j$-th column of $M$, and $M_{i\cdot}$ is the $i$-th row of $M$.
 
\begin{definition}
 Let $M$ be an $n \times m$ matrix of rank $\rho$. Let $z \in [1,\infty)$, and $\alpha, \beta \geq 1$. An $n \times \rho$ matrix $A$ is an $(\alpha, \beta, z)$-conditioned basis for $M$ if the column vectors of $A$ span the column space of $M$, and additionally $A$ satisfies that: (1) $\sum_{i,j} \abs{a_{ij}}^z \leq \alpha^z$, (2) for all $u \in \Real^{\rho}$, $\pnorm{u}{z'} \leq \beta \pnorm{Au}{z}$, where $\pnorm{\cdot}{z'}$ is the dual norm for $\pnorm{\cdot}{z}$ ({\it i.e.} $1/z+1/z'=1$).
\end{definition}

\begin{lemma}
\label{lem:linearalgebraresult}
Let $M$ be an $n \times m$ matrix of rank $\rho$. Let $z \in [1,\infty)$. Let $A$ be an $(\alpha, \beta, z)$-conditioned basis for $M$. For every vector $u \in \Real^m$, the following inequality holds: $\abs{M_{i\cdot}u}^z \leq \left(\pnorm{A_{i\cdot}}{z}^z \cdot \beta^z\right)\pnorm{M u}{z}^z$.

\end{lemma}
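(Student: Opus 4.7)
The plan is to exploit the two defining properties of an $(\alpha,\beta,z)$-conditioned basis, together with a single application of H\"older's inequality, to reduce everything to a one-line estimate. The key conceptual move is that although $u \in \Real^m$ is an arbitrary vector, the product $Mu$ lies in the column space of $M$, which by definition coincides with the column space of $A$. So I can re-encode the action of $M$ on $u$ as the action of $A$ on some lower-dimensional coefficient vector $v \in \Real^\rho$.

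First, I would choose $v \in \Real^\rho$ so that $Mu = Av$. Such a $v$ exists precisely because the columns of $A$ span the column space of $M$. Reading off the $i$-th coordinate of both sides yields the crucial identity $M_{i\cdot}u = A_{i\cdot}v$, which transfers the inner product we want to bound from the $M$-side to the $A$-side, where both conditions in the definition of the conditioned basis are available.

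Next, I would apply H\"older's inequality in $\Real^\rho$ with the conjugate exponents $z$ and $z'$ (where $1/z + 1/z' = 1$) to get
\begin{equation*}
\abs{A_{i\cdot}v} \;\leq\; \pnorm{A_{i\cdot}}{z}\cdot \pnorm{v}{z'}.
\end{equation*}
Then I would invoke property (2) of the $(\alpha,\beta,z)$-conditioned basis applied to $v$, namely $\pnorm{v}{z'} \leq \beta\,\pnorm{Av}{z} = \beta\,\pnorm{Mu}{z}$, and chain the two inequalities to obtain $\abs{M_{i\cdot}u} \leq \pnorm{A_{i\cdot}}{z}\cdot \beta\cdot \pnorm{Mu}{z}$. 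Raising both sides to the $z$-th power delivers the claimed bound.

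There is essentially no obstacle here; property (1) of the conditioned basis (the $\alpha$ bound) is not even used in this particular lemma. The only subtlety worth flagging is the existence of $v$, which requires the column-space-spanning hypothesis on $A$, and the fact that H\"older's inequality is tight enough at this step because we are pairing $\pnorm{\cdot}{z}$ on the row $A_{i\cdot}$ with $\pnorm{\cdot}{z'}$ on $v$, matching exactly the dual norm appearing in the definition.
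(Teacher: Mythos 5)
Your proof is correct and follows essentially the same path as the paper: the paper writes $M = A\tau$ and sets $v = \tau u$, whereas you choose $v$ with $Mu = Av$ directly, but both exploit the column-space condition, then chain H\"older's inequality with property (2) in the same way. No gaps.
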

\begin{proof}
We have $M=A\tau$ for some $\rho \times m$ matrix $\tau$. Then,
\begin{equation*}
\abs{M_{i\cdot}u}^z = \abs{A_{i\cdot}\tau u}^z \leq \pnorm{A_{i\cdot}}{z}^z \cdot \pnorm{\tau u}{z'}^z \leq \pnorm{A_{i\cdot}}{z}^z \cdot \beta^z \pnorm{A\tau u}{z}^z  = \pnorm{A_{i\cdot}}{z}^z \cdot \beta^z \pnorm{M u}{z}^z.
\end{equation*}
The second step is Holder's inequality, and the third uses the fact that $A$ is 
$(\alpha, \beta, z)$-conditioned.
\end{proof}

Using Lemma~\ref{lem:linearalgebraresult}, we derive an upper bound on the total sensitivity when each shape is a hyperplane.

\begin{lemma}[total sensitivity for fitting a hyperplane]
\label{lem: tshyperplane}
Consider the shape fitting problem $(\Real^d, \calF, {\rm dist})$ where $\calF$ is the set of all $(d-1)$-flats, that is, hyperplanes.
The total sensitivity of any $n$-point set is $O(d^{1+z/2})$ for $1 \leq z < 2$, 
$O(d)$ for $z=2$, and $O(d^z)$ for $z>2$.
\end{lemma}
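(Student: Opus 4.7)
The plan is to reduce the sensitivity computation to a linear-algebraic question about $\ell_z$ sensitivities for rows of a matrix, and then invoke Lemma~\ref{lem:linearalgebraresult} together with the known existence of well-conditioned bases of appropriate parameters.

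First I would parametrize each hyperplane $F \in \calF$ by a pair $(a, b) \in \Real^d \times \Real$ with $a \neq 0$, so that $F = \{x \in \Real^d : \innerproduct{a}{x} = b\}$. Then for a point $p \in \Real^d$, the Euclidean distance to $F$ is $\abs{\innerproduct{a}{p} - b}/\pnorm{a}{2}$, and its $z$-th power is $\dist{p}{F} = \abs{\innerproduct{a}{p} - b}^z / \pnorm{a}{2}^z$. Since the ratio $\dist{p}{F}/\dist{P}{F}$ is homogeneous in $(a,b)$, the factor $\pnorm{a}{2}^{-z}$ cancels, and I can write
\[
 \sensitivity{P}{p_i} \;=\; \sup_{u \in \Real^{d+1}\setminus\{0\},\, Mu \neq 0} \frac{\abs{M_{i\cdot} u}^z}{\pnorm{Mu}{z}^z},
\]
where $M$ is the $n \times (d+1)$ matrix whose $i$-th row is $(p_i, 1)$, and $u = (a, -b)^T$. (The only $u \neq 0$ that is \emph{not} realized by a hyperplane has $a = 0$ and $b \neq 0$; one checks directly that this contributes $1/n$ to the ratio and so does not affect the sup.)

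Next I would apply Lemma~\ref{lem:linearalgebraresult}: let $A$ be an $(\alpha, \beta, z)$-conditioned basis for $M$, of rank $\rho \leq d+1$. The lemma yields $\abs{M_{i\cdot} u}^z \leq \pnorm{A_{i\cdot}}{z}^z \, \beta^z \, \pnorm{Mu}{z}^z$ for every $u$, and hence $\sensitivity{P}{p_i} \leq \pnorm{A_{i\cdot}}{z}^z \, \beta^z$. Summing over $i$ and using $\sum_{i} \pnorm{A_{i\cdot}}{z}^z = \sum_{i,j} \abs{a_{ij}}^z \leq \alpha^z$ gives
\[
 \ts{P} \;\leq\; \beta^z \sum_{i=1}^{n} \pnorm{A_{i\cdot}}{z}^z \;\leq\; (\alpha \beta)^z.
\]

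The last step is to plug in the best known parameters for $(\alpha,\beta,z)$-conditioned bases, as developed in \cite{DBLP:journals/siamcomp/DasguptaDHKM09}. For $z \in [1,2)$ every $\rho$-dimensional subspace admits an $(\rho^{1/z+1/2}, 1, z)$-conditioned basis, giving $(\alpha\beta)^z = \rho^{1+z/2} = O(d^{1+z/2})$. For $z=2$ an orthonormal basis is $(\rho^{1/2}, 1, 2)$-conditioned, giving $O(d)$. For $z > 2$ one has a $(\rho^{1/z+1/2}, \rho^{1/z'-1/2}, z)$-conditioned basis with $1/z+1/z'=1$, so $\alpha\beta = \rho^{1/z+1/2+1/z'-1/2} = \rho$, yielding $(\alpha\beta)^z = O(d^z)$. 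This gives the three advertised bounds.

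The main obstacle is the step that translates the geometric sensitivity sup into the matrix sup $\sup_u \abs{M_{i\cdot}u}^z/\pnorm{Mu}{z}^z$; the rest is a mechanical application of the well-conditioned-basis machinery. Care is needed only in handling the degenerate case $a = 0$ (excluded from true hyperplanes but included in the matrix sup) and in verifying that the $z$-dependent parameters cited from \cite{DBLP:journals/siamcomp/DasguptaDHKM09} are exactly those that yield the stated regime boundaries.
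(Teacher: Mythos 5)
Your proof takes essentially the same route as the paper: parametrize hyperplanes by a vector in $\Real^{d+1}$, form the $n \times (d+1)$ matrix $M$ with rows $(p_i^T, 1)$, reduce the sensitivity to $\sup_u \abs{M_{i\cdot}u}^z/\pnorm{Mu}{z}^z$, and invoke Lemma~\ref{lem:linearalgebraresult} with the $(\alpha,\beta,z)$-conditioned basis parameters from~\cite{DBLP:journals/siamcomp/DasguptaDHKM09}. The only cosmetic difference is that you appeal to homogeneity in $(a,b)$ and handle the degenerate $a=0$ case explicitly, while the paper normalizes $\sum u_i^2 = 1$; both handle the same minor issue and the substance is identical.
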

\begin{proof}
We can parameterize a hyperplane with a vector in $\Real^{d+1}$, $u=\begin{bmatrix} u_1 & \cdots & u_{d+1} \end{bmatrix}^{T}$: the hyperplane determined by $u$ is $h_{u}=\{x \in \Real^d \vert \sum_{i=1}^d u_i x_i + u_{d+1}=0\}$, where $x_i$ denotes the $i^{\rm th}$ entry of the vector $x$. Without loss of generality, we may assume that $\sum_{i=1}^d u_i^2=1$. The Euclidean distance to $h_u$ from a point $q \in \Real^d$ is $\dist{q}{h_{u}}=\abs{\sum_{i=1}^d u_i q_i+u_{d+1}}/\sqrt{\sum_{i=1}^d u_i^2}=\abs{\sum_{i=1}^d u_i q_i+u_{d+1}}.$
(the second equality follows from the assumption that $\sum_{i=1}^d u_i^2=1$.) 

Let $P = \{p_1, p_2, \ldots, p_n\} \subseteq \Real^d$ be any set of $n$ points. 
Let $\tilde{p_i}$ denote the row vector $\begin{bmatrix}  p_i^T & 1\end{bmatrix}$, and let $M$ be the $n \times (d+1)$ matrix whose $i^{\rm th}$ row is $\tilde{p_i}$. Then, $\dist{p_i}{h_u} = \abs{M_{i\cdot}u}^z$, and $\dist{P}{h_u} = \sum_{i=1}^n \abs{M_{i\cdot}u}^z = \pnorm{M u}{z}^z$. Then using Lemma~\ref{lem:linearalgebraresult}, we have $\sensitivity{P}{p_i} = \sup_u \frac{\abs{M_{i\cdot}u}^z}{\pnorm{M u}{z}^z} \leq
\pnorm{A_{i\cdot}}{z}^z \cdot \beta^z$,
where $A$ is an $(\alpha, \beta, z)$-conditioned basis for $M$. Thus,
\[ \ts{P} = \sum_{i=1}^n \sensitivity{P}{p_i}  \leq \beta^z \sum_{i=1}^n \pnorm{A_{i\cdot}}{z}^z = \beta^z \sum_{i,j}\abs{a_{ij}}^z= (\alpha \beta)^z.\] 
For $1 \leq z<2$, $M$ has $((d+1)^{1/z+1/2},1,z)$-conditioned basis; for $z=2$, $M$ has $((d+1)^{1/2},1,z)$-conditioned basis; for $z>2$, $M$ has $((d+1)^{1/z+1/2}, (d+1)^{1/z'-1/2},z)$-conditioned basis \cite{DBLP:journals/siamcomp/DasguptaDHKM09}. Thus the total sensitivity for the three cases are $(d+1)^{1+z/2}$, $d+1$, and $(d+1)^{z}$, respectively.
\end{proof}

It is now easy to derive dimension-dependent bounds on the sensitivity when each shape is a $j$-subspace.

\begin{corollary}[Total sensitivity for fitting a $j$-subspace]
\label{coro:jsubspace}
Consider the shape fitting problem $(\Real^d, \calF, {\rm dist})$ where $\calF$ is the set of all $j$-flats.
The total sensitivity of any $n$-point set is $O(d^{1+z/2})$ for $1 \leq z < 2$, 
$O(d)$ for $z=2$, and $O(d^z)$ for $z>2$.
\end{corollary}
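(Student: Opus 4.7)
The plan is to reduce the $j$-subspace case directly to the hyperplane case of Lemma~\ref{lem: tshyperplane} via a pointwise sensitivity comparison. Concretely, I will argue that for every point $p \in \Real^d$ and every $j$-flat $F \in \calF$ there exists a hyperplane $H \supseteq F$ with $\dist{p}{H} = \dist{p}{F}$. Once this geometric claim is established, the sensitivity of $p$ with respect to $j$-flats is dominated by its sensitivity with respect to hyperplanes for the same instance $P$, and the three bounds of the corollary then follow immediately from Lemma~\ref{lem: tshyperplane}.

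To construct $H$ for a given pair $(p,F)$, write $F = v + V$ where $V \subset \Real^d$ is a $j$-dimensional linear subspace, and let $q = \proj{p}{F}$. The case $p = q$ is trivial since then $\dist{p}{F} = 0$ and any hyperplane through $F$ satisfies $\dist{p}{H} = 0$. Otherwise $p - q$ is orthogonal to $V$, so the unit vector $n = (p-q)/\pnorm{p-q}{2}$ lies in $V^\perp$. Take $H = \{x \in \Real^d : \langle n, x - v\rangle = 0\}$. Because $V \subseteq n^\perp$, this hyperplane contains $F$; and since $\langle n, q - v\rangle = 0$ (as $q - v \in V$) while $\langle n, p - q\rangle = \pnorm{p-q}{2}$, the Euclidean distance from $p$ to $H$ equals the Euclidean distance from $p$ to $F$. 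Raising to the $z^{\rm th}$ power, $\dist{p}{H} = \dist{p}{F}$.

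The argument now closes easily. Because $H \supseteq F$, every point $r \in \Real^d$ satisfies $\dist{r}{H} \leq \dist{r}{F}$, so $\dist{P}{H} \leq \dist{P}{F}$. Combining this with $\dist{p}{H} = \dist{p}{F}$ gives
\[
\frac{\dist{p}{F}}{\dist{P}{F}} \;=\; \frac{\dist{p}{H}}{\dist{P}{F}} \;\leq\; \frac{\dist{p}{H}}{\dist{P}{H}},
\]
which is at most the sensitivity of $p$ in the hyperplane fitting instance on $P$. Taking the supremum over all $j$-flats $F$ on the left and summing over $p \in P$ shows that the total sensitivity for the $j$-subspace problem is no larger than that for the hyperplane problem, and Lemma~\ref{lem: tshyperplane} then yields the three stated bounds $O(d^{1+z/2})$, $O(d)$, and $O(d^z)$. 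The only non-routine ingredient is the existence of the hyperplane $H$ with $\dist{p}{H} = \dist{p}{F}$, and that is a short orthogonal-decomposition argument; I do not foresee any serious obstacle.
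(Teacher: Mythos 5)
Your proof is correct and matches the paper's argument essentially step for step: both construct, for each $j$-flat $F$, the hyperplane through $F$ whose normal is the unit vector along $p - \proj{p}{F}$, observe that this preserves the distance from $p$ while not increasing the distance from any other point of $P$, and conclude that the $j$-flat sensitivity of $p$ is dominated by its hyperplane sensitivity, after which Lemma~\ref{lem: tshyperplane} gives the stated bounds. Taking the supremum over $F$ at the end, rather than fixing a supremum-achieving $F_p$ as the paper does, is a minor stylistic refinement but not a different approach.
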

\begin{proof}
Denote by $\calF'$ the set of hyperplanes in $\Real^d$. Let $P \subseteq \Real^d$ be an arbitrary $n$-point set. We first show that $\sensitivity{P,\calF}{p} \leq \sensitivity{P,\calF'}{p}$, where the additional subscript is being used to indicate which shape fitting problem we are talking about (hyperplanes or $j$-flats). Let $p$ be an arbitrary point in $P$. Let $F_p \in \calF$ denote the $j$-subspace such that $\sensitivity{P,\calF}{p}=\dist{p}{F_p}/\dist{P}{F_p}$. Let $\proj{p}{F_p}$ denote the projection of $p$ on $F_p$. Consider the hyperplane $F'$ containing $F_p$ and orthogonal to the vector $p-\proj{p}{F_p}$. We have $\dist{p}{F'}=\dist{p}{F_p}$, whereas $\dist{q}{F'} \leq \dist{q}{F_p}$ for each $q \in P$. Therefore, $\sensitivity{P,\calF'}{p} \geq \dist{p}{F'}/\dist{P}{F'} \geq \dist{p}{F_p}/\dist{P}{F_p} = \sensitivity{P,\calF}{p}.$
It follows that $\ts{P,\calF} \leq \ts{P,\calF'}$. The statement in the corollary now follows from Lemma~\ref{lem: tshyperplane}.
\end{proof}

\subsection{Dimension-independent Bounds on the Sensitivity}

We now derive dimension-independent upper bounds for the  total sensitivity for the $j$-subspace fitting problem.
\begin{theorem}[Total sensitivity for $j$-subspace fitting problem]
\label{thm:jsubspacetotalsensitivity}
Consider the shape fitting problem $(\Real^d, \calF, {\rm dist})$ where $\calF$ is the set of all $j$-flats.
The total sensitivity of any $n$-point set is $O(j^{1+z/2})$ for $1 \leq z < 2$, 
$O(j)$ for $z=2$, and $O(j^z)$ for $z>2$.
\end{theorem}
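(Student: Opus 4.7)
The plan is to combine the dimension-reduction machinery of Theorems~\ref{thm:tsprojectedpoints} and \ref{thm:dimensionreduction} with the dimension-dependent bound already obtained in Corollary~\ref{coro:jsubspace}, applied to an ambient space of dimension $O(j)$ instead of $d$. This is the same template used for $k$-median/$k$-means and $k$-line clustering in Sections~4 and~5.

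First, I would fix an arbitrary $n$-point set $P \subset \Real^d$, let $F^\ast \in \calF$ be a $j$-flat that minimizes $\dist{P}{F}$, and set $P' = \proj{P}{F^\ast}$. Since $F^\ast$ is an affine $j$-subspace of $\Real^d$, the projected set $P'$ is contained in a linear subspace of dimension at most $m_1 = j+1$. Specializing Theorem~\ref{thm:dimensionreduction} to $k = 1$ gives $m_2 = j+1$, so there is a linear subspace $G$ of dimension $m = \min\{m_1+m_2, d\} \leq 2(j+1)$ that contains $P'$. Letting $P'' \subset \Real^m$ be the coordinate representation of $P'$ with respect to an orthonormal basis of $G$, Theorem~\ref{thm:dimensionreduction} yields $\ts{P'} = \ts{P''}$, where $P''$ is viewed as an instance of the $j$-subspace fitting problem in $\Real^m$.

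Next, I would apply Corollary~\ref{coro:jsubspace} to $P''$ with ambient dimension $m = O(j)$. This immediately gives $\ts{P''} = O(m^{1+z/2}) = O(j^{1+z/2})$ for $1 \leq z < 2$, $\ts{P''} = O(m) = O(j)$ for $z = 2$, and $\ts{P''} = O(m^z) = O(j^z)$ for $z > 2$; the same bounds therefore hold for $\ts{P'}$. Finally, by the remark following Theorem~\ref{thm:tsprojectedpoints}, the relaxed triangle inequality for $({\rm dist})^{1/z} = \pnorm{\cdot}{2}$ raised to the $z^{\rm th}$ power holds with $\alpha = 2^{z-1}$, a constant depending only on $z$. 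Thus Theorem~\ref{thm:tsprojectedpoints} gives $\ts{P} \leq 2\alpha^2 \ts{P'} + \alpha = O(\ts{P'})$, so the bounds on $\ts{P'}$ transfer to $\ts{P}$ up to a $z$-dependent multiplicative constant. Since $P$ was arbitrary, this proves the claimed bounds on $\ts{n}$.

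There is no real obstacle in this argument: every ingredient is already in hand. The only point requiring a moment of care is the observation that $F^\ast$ being an affine $j$-flat forces $P' \subset F^\ast$ to lie in a linear subspace of dimension at most $j+1$, so that the reduced ambient dimension delivered by Theorem~\ref{thm:dimensionreduction} is $O(j)$ rather than $O(d)$. It is precisely this step, combined with the fact that Corollary~\ref{coro:jsubspace} is sharp in the ambient dimension, that strips the $d$-dependence out of the final bound.
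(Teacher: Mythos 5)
Your proof is correct and follows essentially the same route as the paper: project $P$ onto an optimal $j$-flat $F^\ast$ (Theorem~\ref{thm:tsprojectedpoints}), reduce the ambient dimension to $O(j)$ via Theorem~\ref{thm:dimensionreduction}, and then invoke the dimension-dependent bound of Corollary~\ref{coro:jsubspace}. The only difference is that you spell out the dimension bookkeeping ($m_1 = m_2 = j+1$, so $m \leq 2(j+1)$) a bit more explicitly than the paper does, which if anything makes the argument clearer; your exponent $O(j^{1+z/2})$ for $1 \leq z < 2$ matches the theorem statement, whereas the paper's proof text contains a small typo ($j^{2+z/2}$).
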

\begin{proof}
 Use Theorem~\ref{thm:tsprojectedpoints}, note that the projected point set $P'$ is contained in a $j$-subspace. Further, each shape is a $j$-subspace. So, applying Theorem~\ref{thm:dimensionreduction} and Corollary~\ref{coro:jsubspace}, the total sensitivity is $O(j^{2+z/2})$ or $z\in [1,2)$, $O(j)$ for $z=2$ and $O(j^{z})$ for $z > 2$.
\end{proof}

Using Theorem~\ref{thm:jsubspacetotalsensitivity} and the fact that $\dim{P}$ for the $j$-subspace fitting problem is $O(jd)$, we obtain small $\epsilon$-coresets:
\begin{theorem}[$\epsilon$-coreset for $j$-subspace fitting problem]
Consider the shape fitting problem $(\Real^d, \calF, {\rm dist})$ where $\calF$ is the set of all $j$-flats. For any $n$-point set, there exists an $\epsilon$-coreset whose size is $O(j^{3+z}d\epsilon^{-2})$ for $z\in [1,2)$, $O(j^3d\epsilon^{-2})$ for $z=2$ and $O(j^{2z+1}d\epsilon^{-2})$ for $z \geq 2$.  
\end{theorem}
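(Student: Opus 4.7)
The plan is to combine the sensitivity bound just established in Theorem~\ref{thm:jsubspacetotalsensitivity} with the generic coreset-from-sensitivity machinery of Theorem~\ref{ts:connectionbetweentotalsensitivityandcoreset}. That theorem guarantees an $\epsilon$-coreset of size $O((\ts{n}/\epsilon)^2 \dim{P})$, so once we have bounds on both factors in the product, the result will follow by a routine calculation of the three cases.

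The first step is to pin down $\dim{P}$. The remark following Definition~\ref{defn:shatteringdim} records that for the $(j,k)$-projective clustering problem with the $z^{\rm th}$ power of Euclidean distance, $\dim{P}=O(jdk)$. Specializing to $k=1$ gives $\dim{P}=O(jd)$ for the $j$-subspace fitting problem, uniformly in the instance $P$ and independent of $n$. This is the only fact about the shattering dimension we need, and it can be taken directly from the paper's preliminaries.

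The second step is to plug the three cases from Theorem~\ref{thm:jsubspacetotalsensitivity} into the coreset-size formula. Squaring the sensitivity bounds and multiplying by $\dim{P}=O(jd)$ and $1/\epsilon^2$ yields, for $z\in[1,2)$, a size of $O(j^{2+z}\cdot jd/\epsilon^2) = O(j^{3+z}d\epsilon^{-2})$; for $z=2$, a size of $O(j^2\cdot jd/\epsilon^2) = O(j^{3}d\epsilon^{-2})$; and for $z>2$, a size of $O(j^{2z}\cdot jd/\epsilon^2) = O(j^{2z+1}d\epsilon^{-2})$. These three products match exactly the three bounds in the statement of the theorem.

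There is no real technical obstacle here, since both ingredients are already in hand: Theorem~\ref{ts:connectionbetweentotalsensitivityandcoreset} packages the sampling/$\epsilon$-approximation analysis that turns a sensitivity bound into a coreset, and Theorem~\ref{thm:jsubspacetotalsensitivity} supplies the dimension-independent sensitivity bound obtained via Theorems~\ref{thm:tsprojectedpoints} and~\ref{thm:dimensionreduction}. The only point worth noting, rather than a difficulty, is that one should verify the output of Theorem~\ref{ts:connectionbetweentotalsensitivityandcoreset} respects the positive-weight and subset-of-$P$ requirements imposed by Definition~\ref{defn: epsiloncoreset}; this is guaranteed by the paper's overall framework, since the sampling scheme assigns each chosen point $p$ the positive weight $\ts{P}/\sensitivity{P}{p}$ and draws $S$ as a (multi)subset of $P$.
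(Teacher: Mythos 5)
Your proof is correct and follows essentially the same route as the paper: the paper's own proof is the one-line citation of Theorem~\ref{thm:jsubspacetotalsensitivity} together with Theorem~\ref{ts:connectionbetweentotalsensitivityandcoreset}, and you have simply spelled out the $\dim{P}=O(jd)$ specialization and the three case-by-case arithmetic steps that the paper leaves implicit.
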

\begin{proof}
 The result follows from Theorem~\ref{thm:jsubspacetotalsensitivity}, and Theorem~\ref{ts:connectionbetweentotalsensitivityandcoreset}.
\end{proof}

We note that for the case $j = d-1$ and $z = 2$, a linear algebraic result from~\cite{DBLP:conf/stoc/BatsonSS09} yields a coreset whose size is an improved 
$O(d \epsilon^{-2})$.  

\section{The $(j,k)$ integer projective clustering}
\begin{theorem}
Consider the shape fitting problem $(\Real^d, \calF, {\rm dist})$, where
$\calF$ is the set of $k$-tuples of $j$-flats. Let $P \subset \Real^d$ be any $n$-point instance with integer coordinates, the magnitude of each coordinate being at most $n^c$, for some constant $c$. The total sensitivity $\ts{P}$ of 
$P$ is $O((\log n)^{f(k,j)})$, where $f(k,j)$ is a function of only $k$ and $j$.  There exists an $\epsilon$-coreset for $P$ of size $O((\log n)^{2f(k,j)}kjd\epsilon^{-2})$. 
\end{theorem}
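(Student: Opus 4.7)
The plan is to combine the dimension-reduction machinery developed in Section 3 with the fixed-dimension integer bound from Theorem~\ref{thm:tsklinecenterintegerjkprojectiveclustering}, closely paralleling the proof of Theorem~\ref{thm:tsklinecenter} for $k$-line clustering. First, I would apply Theorem~\ref{thm:tsprojectedpoints} with $\alpha = \max\{1, 2^{z-1}\}$ (per the remark following it) to reduce the problem to bounding $\ts{P'}$, where $P' = \proj{P}{F^{\ast}}$ and $F^{\ast}$ is an optimal $k$-tuple of $j$-flats fitting $P$. Since $F^{\ast}$ lies in a subspace of dimension at most $k(j+1)$, so does $P'$. Theorem~\ref{thm:dimensionreduction} then allows me to pass from $P'$ to an instance $P''$ sitting in $\Real^m$ with $m = \min\{d, 2k(j+1)\} = O(jk)$ and $\ts{P'} = \ts{P''}$, after fixing an orthonormal basis of a suitable subspace containing $P'$.

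The crux is bounding $\ts{P''}$ by invoking Theorem~\ref{thm:tsklinecenterintegerjkprojectiveclustering} in the low ambient dimension $m$, which would give $O((\log n)^{f(m,j,k)}) = O((\log n)^{f(j,k)})$ since $m$ depends only on $j$ and $k$. The subtlety is that although $P$ has integer coordinates of magnitude at most $n^c$, the coordinates of $P''$ in an orthonormal basis are in general irrational, so Theorem~\ref{thm:tsklinecenterintegerjkprojectiveclustering} cannot be applied verbatim. To handle this I would round each coordinate of each point of $P''$ to the nearest multiple of $\delta := n^{-C}$ for a sufficiently large constant $C = C(c, j, k)$; after scaling by $1/\delta$, the result is an integer instance $\tilde{P}$ in $\Real^m$ with coordinates of magnitude $n^{O(1)}$. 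An auxiliary lemma would verify that for $\delta$ chosen this small, the sensitivities change by at most a constant multiplicative factor under the rounding, so $\ts{P''} = O(\ts{\tilde{P}}) = O((\log n)^{f(j,k)})$; combined with the earlier reductions this yields the claimed $\ts{P} = O((\log n)^{f(j,k)})$.

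For the coreset statement, I would invoke the fact that $\dim{P} = O(jdk)$ (as noted after Definition~\ref{defn:shatteringdim}) and plug the sensitivity bound into Theorem~\ref{ts:connectionbetweentotalsensitivityandcoreset}, obtaining an $\epsilon$-coreset of size $O((\ts{P}/\epsilon)^2 \cdot \dim{P}) = O((\log n)^{2f(j,k)} jkd / \epsilon^{2})$, as claimed. The main technical obstacle is the rounding step: one must verify that snapping $P''$ to an inverse-polynomial grid preserves per-point sensitivities up to a constant factor, which requires controlling how much $\dist{P''}{F}$ can decrease under a coordinate-wise $O(\delta)$ perturbation, for every competitor shape $F \in \calF$. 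This is delicate when $\dist{P''}{F}$ is very small relative to the ambient magnitude of the points; one either establishes a suitable lower bound using the integrality of the original $P$, or sidesteps the issue entirely by taking $F^{\ast}$ to be a simplicial constant-factor approximation whose $j$-flats each pass through $j+1$ affinely independent input points, in which case $P'$ already has rational coordinates with denominators bounded by $n^{O(cj)}$ and an explicit grid rounding is unnecessary.
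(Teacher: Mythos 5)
Your proposal follows the paper's argument essentially step by step: project $P$ onto the optimal $k$-tuple of $j$-flats to get $P'$, use Theorems~\ref{thm:tsprojectedpoints} and~\ref{thm:dimensionreduction} to pass to an instance in ambient dimension $O(jk)$, invoke the fixed-dimension integer bound of Theorem~\ref{thm:tsklinecenterintegerjkprojectiveclustering} there, and then combine with $\dim{P}=O(jkd)$ and Theorem~\ref{ts:connectionbetweentotalsensitivityandcoreset} for the coreset. Where the paper merely flags the rounding issue and defers to \cite{DBLP:conf/soda/VaradarajanX12}, you make the difficulty explicit and offer a tidy alternative (choosing $F^{\ast}$ to be a simplicial constant-factor approximation so that $P'$ has controlled rational coordinates), which is a welcome elaboration but not a different route.
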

\begin{proof}
 Observe that the projected point set $P'=\proj{P}{\{J_1^{\ast},\cdots, J_k^{\ast}\}}$, where $\{J_1^{\ast},\cdots, J_k^{\ast}\}$ is an optimum $k$-tuple of $j$-flats fitting $P$, is contained in a subspace of dimension $O(jk)$. Using Theorem~\ref{thm:tsklinecenterintegerjkprojectiveclustering}, Theorem~\ref{thm:dimensionreduction}, and Theorem~\ref{thm:tsprojectedpoints}, the total sensitivity $\ts{P}$ is upper bouned by $O((\log n)^{f(k,j)})$, where $f(k,j)$ is a function of $k$ and $j$.  (A technical complication is that the coordinates of $P'$, in the appropriate orthonormal basis, may not be integers. This can be addressed by rounding them to integers, at the expense of increasing the constant $c$. A similar procedure is adopted in \cite{DBLP:conf/soda/VaradarajanX12}, and we omit the details here.)

Using Theorem~\ref{ts:connectionbetweentotalsensitivityandcoreset} and the fact that $\dim{P}$ is $O(djk)$, we obtain the bound on the coreset.
\end{proof}

\section{Acknowledgements.} We thank the anonymous reviewers and Dan Feldman for their insightful feedback.

\bibliography{On_the_Sensitivity_of_Shape_Fitting_Problems}

\end{document}